\newcommand{\Ex}{\textbf{\textsc{E}}}
\newtheorem{thm}{Theorem}
\newtheorem{prop}[thm]{Proposition}
\title{Throughput-Optimal Opportunistic Scheduling in the Presence of Flow-Level Dynamics\thanks{Research supported by NSF Grants 07-21286 and 08-31756, ARO MURI Subcontracts, and the DTRA grants HDTRA1-08-1-0016 and HDTRA1-09-1-0055.}\thanks{A shorter version of this paper appears in the Proc. IEEE INFOCOM 2010.}}
\author{\IEEEauthorblockN{Shihuan Liu\IEEEauthorrefmark{1}, {\em Student Member, IEEE,} Lei Ying\IEEEauthorrefmark{1}, {\em Member, IEEE,} and
R. Srikant \IEEEauthorrefmark{2}, {\em Fellow, IEEE}}\\
\IEEEauthorblockA{\IEEEauthorrefmark{1}Department of Electrical and Computer Engineering, Iowa State University\\
Email: \{liush08, leiying\}@iastate.edu}\\
\IEEEauthorblockA{\IEEEauthorrefmark{2}Department of Electrical and Computer Engineering, University of Illinois at Urbana-Champaign\\
Email: rsrikant@illinois.edu}}
\begin{document}
\maketitle
\vspace{-0.5in}
\begin{abstract}
We consider multiuser scheduling in wireless networks with channel variations and flow-level dynamics.  Recently, it has been shown that the MaxWeight algorithm, which is throughput-optimal in networks with a fixed number of users, fails to achieve the maximum throughput in the presence of flow-level dynamics. In this paper, we propose a new algorithm, called \emph{workload-based scheduling with learning}, which is provably throughput-optimal, requires no prior knowledge of channels and user demands, and performs significantly better than previously suggested algorithms.
\end{abstract}

\section{Introduction}
Multiuser scheduling is one of the core challenges in wireless communications. Due to channel fading and wireless interference, scheduling algorithms need to dynamically allocate resources based on both the demands of the users and the channel states to maximize network throughput.  The
celebrated MaxWeight algorithm developed in \cite{TasEph_92} for exploiting channel variations works as follows. Consider a network with a single base station and $n$ users, and further assume that the base station can transmit to only one user in each time slot. The MaxWeight algorithm computes the product of the queue length and current channel rate for each user, and chooses to transmit to that user which has the largest product; ties can be broken arbitrarily. The throughput-optimality property of the MaxWeight algorithm  was first established in \cite{TasEph_92}, and the results were later extended to more general channel and arrival models in \cite{AndKumRam_04,ErySriPer_05,neemodroh05}. The MaxWeight algorithm should be contrasted with other opportunistic scheduling such as \cite{liuchoshr01,tsevislar02} which exploit channel variations to allocate resources fairly assuming continuously backlogged users, but which are not throughput-optimal when the users are not continuously backlogged.

While the results in \cite{TasEph_92, AndKumRam_04, ErySriPer_05} demonstrate the power of  MaxWeight-based algorithms, they were obtained under the assumptions that \emph{the number of users in the network is fixed and the traffic flow generated by each user is long-lived, i.e., each user continually injects new bits into the network.} However, practical networks have flow-level dynamics: users arrive to transmit data and leave the network after the data are fully transmitted.  In a recent paper \cite{VanBorShn_09}, the authors show that the MaxWeight algorithm is in fact \emph{not throughput optimal} in networks with flow-level dynamics by providing a clever example showing the instability of the MaxWeight scheduling. The intuition is as follows: if a long-lived flow does not receive enough service, its backlog builds up, which forces the MaxWeight scheduler to allocate more service to the flow. This interaction between user backlogs and scheduling guarantees the correctness of the resource allocation. However, if a flow has only a finite number of bits,  its backlog does not build up over time and it is possible for the MaxWeight to stop serving such a flow and thus, the flow may stay in the network forever. Thus, in a network where finite-size flows continue to arrive, the number of flows in the network could increase to infinity. One may wonder why flow-level instability is important since, in real networks, base stations limit the number of simultaneously active flows in the network by rejecting new flows when the number of existing flows reaches a threshold. The reason is that, if a network model without such upper limits is unstable in the sense that the number of flows grows unbounded, then the corresponding real network with an upper limit on the number of flows will experience high flow blocking rates. This fact is demonstrated in our simulations later.

In \cite{VanBorShn_09}, the authors address this instability issue of MaxWeight-based algorithms, and establish necessary and sufficient conditions for the stability of networks with flow-level dynamics. The authors also propose throughput-optimal scheduling algorithms. However, as the authors mention in \cite{VanBorShn_09}, the proposed algorithms require prior knowledge of channel distribution and traffic distribution, which is difficult and sometimes impossible to obtain in practical systems, and further, the performance of the proposed algorithms is also not ideal. A delay-driven MaxWeight scheduler has also been proposed to stabilize the network under flow-level dynamics \cite{SadVec_09}. The algorithm however works only when the maximum achievable rates of the flows are identical.

Since flow arrivals and departures are common in reality, we are interested in developing practical scheduling algorithms that are throughput-optimal \emph{under flow-level dynamics.} We consider a wireless system with a single base station and multiple users (flows). The network contains  both long-lived flows, which keep injecting bits into the network, and short-lived flows, which have a finite number of bits to transmit. The main contributions of this paper include the following:
\begin{itemize}
\item We obtain the necessary conditions for flow-level stability of networks with both long-lived flows and short-lived flows. This generalizes the result in \cite{VanBorShn_09}, where only short-lived flows are considered.

\item We propose a simple algorithm for networks with short-lived flows only. Under this algorithm, each flow keeps track of the best channel condition that it has seen so far. Each flow whose current channel condition is equal to the best channel condition that it has seen during its lifetime is eligible for transmission. It is shown that an algorithm which uniformly and randomly chooses a flow from this set of eligible flows for transmission is throughput-optimal. Note that the algorithm is a purely opportunistic algorithm in that it selects users for transmission when they are in the best channel state that they have seen so far, without considering their backlogs.

\item Based on an optimization framework, we propose to use the estimated \emph{workload}, the number of time slots required to transmit the remainder of a flow based on the best channel condition seen by the flow so far, to measure the backlog of short-lived flows. By comparing this short-lived flow backlog to the queue lengths and channel conditions of the long-lived flows, we develop a new algorithm, named workload-based scheduling with learning, which is throughput-optimal under flow-level dynamics. The term "learning" refers to the fact that the algorithm learns the best channel condition for each short-lived flow and attempts to transmit when the channel condition is the best.

\item We use simulations to evaluate the performance of the proposed scheduling algorithm, and observe that the workload-based scheduling with learning performs significantly better than the MaxWeight scheduling in various settings.
\end{itemize}

The terminology of long-lived and short-lived flows above has to be interpreted carefully in practical situations. In practice, each flow has a finite size and thus, all flows eventually will leave the system if they receive sufficient service. Thus, all flows are short-lived flows in reality. Our results suggest that transmitting to users who are individually in their best estimated channel state so far is thus, throughput optimal. On the other hand, it is also well known that real network traffic consists of many flows with only a few packets and a few flows with a huge number of packets. If one considers the time scales required to serve the small-sized flows, the large-sized flows will appear to be long-lived (i.e., persistent forever) in the terminology above. Thus, if one is interested in performance over short time-scales, an algorithm which considers flows with a very large number of packets as being long-lived may lead to better performance and hence, we consider the more general model which consists of both short-lived flows and long-lived flows. Our simulations later confirm the fact that the algorithm which treats some flows are being long-lived leads to better performance although throughput-optimality does not require such a model. In addition, long-lived flows partially capture the scenario where all bits from a flow do not arrive at the base station all at once. This fact is also exploited in our simulation experiments.

\section{Basic Model}

{\bf Network Model:} We consider a discrete-time wireless downlink network with a single base station and many flows, each flow associates with a distinct mobile user. The base station can serve only one flow at a time.

{\bf Traffic Model:} The network consists of the following two types of flows:
\begin{itemize}
\item {\bf Long-lived flows:} Long-lived flows are traffic streams that are always in the network and continually generate bits to be transmitted.

\item {\bf Short-lived flows:} Short-lived flows are flows that have a finite number of bits to transmit. A short-lived flow enters the network at a certain time, and leaves the system after all bits are transmitted.
\end{itemize}
We assume that the set of long-lived flows is fixed, and short-lived flows arrive and depart. We let $l$ be the index for long-lived flows, $\cal L$ be the set of long-lived flows, and $L$ be the number of long-lived flows, i.e., $L=|{\cal L}|.$ Furthermore, we let $X_l(t)$ be the number of new bits injected by long-lived flow $l$ in time slot $t,$ where $X_l(t)$ is a discrete random variable with finite support, and independently and identically distributed (i.i.d.) across time slots. We also assume $\Ex[X_l(t)]=x_l$ and $X_l(t)\leq X^{\max}$ for all $l$ and $t.$

Similarly, we let $i$ be the index for short-lived flows, ${\cal I}(t)$ be the set of short-lived flows in the network at time $t,$ and $I(t)$ be the number of short-lived flows at time $t,$ i.e., $I(t)=|{\cal I}(t)|.$ We denote by $f_i$ the size (total number of bits) of short-lived flow $i,$ and  assume $f_i\leq F^{\max}$ for all $i.$

It is important to note that we allow different short-lived flows to have different maximum link rates. A careful consideration of our proofs will show the reader that the learning algorithm is not necessary if all users have the same maximum rate and that one can simply transmit to the user with the best channel state if it is assumed that all users have the same maximum rate. However, we do not believe that this is a very realistic scenario since SNR variations will dictate different maximum rates for different users.

{\bf Residual Size and Queue Length:} For a short-lived flow $i,$ let $Q_i(t)$ which we call the residual size, denote the number of bits still remaining in the system at time $t$. For a long-lived flow $l,$ let $Q_l(t)$ denote the number of bits stored at the queue at the base station.

{\bf Channel Model:} There is a wireless link between each user and the base station.  Denote by $R_i(t)$ the state of the link between short-lived flow $i$ and the base station at time $t$ (i.e., the maximum rate at which the base station can transmit to short-lived flow $i$ at time $t$), and $R_l(t)$ the state of the link between long-lived flow $l$ and the base station at time $t.$ We assume that $R_i(t)$ and $R_l(t)$ are discrete random variables with finite support. Define $R_i^{\max}$ and $R_l^{\max}$  to be the largest values that these random variables can take, i.e., $P(R_j(t) > R_j^{\max})=0$ for each $j\in {\cal L} \bigcup \left(\bigcup_t {\cal I}(t)\right).$ \begin{color}{black}Choose \end{color} $p^{\max}_s>0$ and $R^{\max}>0$ such that
\begin{eqnarray*}
&\Pr(R_i(t)=R_i^{\max})\geq p^{\max}_s &\forall i, t\\
&\max\left\{\max_{i} R^{\max}_i, \max_l R^{\max}_l\right\}\leq R^{\max}.&
\end{eqnarray*} The states of wireless links are assumed to be independent across flows and time
slots (but not necessarily identically distributed across flows). The independence assumption across time slots can be relaxed easily but at the cost of more complicated proofs.

\section{Workload-based Scheduling with Learning}
In this section, we introduce a new scheduling algorithm called Workload-based Scheduling with Learning (WSL).

\noindent{\bf Workload-based Scheduling with Learning:} For a short-lived flow $i,$ we define $$\tilde{R}_{i}^{\max}(t)=\max_{\max\{t-D, b_i\}\leq s \leq t} R_i(s),$$ where $b_i$ is the time short-lived flow $i$ joins the network and $D>0$ is called the learning period. A key component of this algorithm is to use $R^{\max}_i$ to evaluate the workload of short-lived flows (the reason will be explained in a detail in Section \ref{sec: th}). However, $R_i^{\max}$ is in general unknown, so the scheduling algorithm uses $\tilde{R}_i^{\max}(t)$ as an estimate of $R_i^{\max}.$


During each time slot, the base station first checks the following inequality:
\begin{eqnarray}
\alpha \sum_{i\in {\cal I}(t)} \left\lceil\frac{Q_i(t)}{\tilde{R}_i^{\max}(t)}\right\rceil > \max_{l\in{\cal L}} Q_l(t) R_l(t), \label{eq: svsl}
\end{eqnarray} where $\alpha>0.$
\begin{list}{\labelitemi}{\leftmargin=1em}
\item If inequality (\ref{eq: svsl}) holds, then the base station serves a short-lived flow as follows: if at least one short-lived flow (say flow $i$)  satisfies $R_i(t)\geq Q_i(t)$ or $R_i(t)=\tilde{R}^{\max}_i(t),$ then the base station selects such a flow for transmission (ties are broken according to a \emph{good} tie-breaking rule, which is defined at the end of this algorithm); otherwise, the base station picks an arbitrary short-lived flow to serve.

\item If inequality (\ref{eq: svsl}) does not hold, then the base station serves a long-lived flow $l^*$ such that $$l^*\in \arg\max_{l\in{\cal L}} Q_l(t) R_l(t)$$ (ties are broken arbitrarily).
\end{list}

{\bf ``Good'' tie-breaking rule:} Assume that the tie-breaking rule is applied to pick a short-lived flow every time slot (but the flow is served only if
$\alpha \sum_{i\in {\cal I}(t)} \left\lceil\frac{Q_i(t)}{\tilde{R}_i^{\max}(t)}\right\rceil > \max_{l\in{\cal L}} Q_l(t) R_l(t)$). We define ${\cal E}_{\scriptsize miss}(t)$ to be the event that  the tie-breaking rule selects a short-lived flow with $\tilde{R}^{\max}_i(t)\not=R^{\max}_i.$ \begin{color}{black} Define
$$W_s(t)=\sum_{i\in {\cal I}(t)} \left\lceil\frac{Q_i(t)}{{R}_i^{\max}}\right\rceil,$$
which is he total workload of the system at time $t.$\end{color} A tie-breaking rule is said to be \emph{good} if the following condition holds: Consider the WSL with the given tie-breaking rule and learning period $D.$ Given any $\epsilon_{\scriptsize miss}>0,$ there exist $N_{\epsilon_{\scriptsize miss}}$ and $D_{\epsilon_{\scriptsize miss}}$ such that $$\Pr\left({\cal E}_{\scriptsize miss}(t)\right)\leq \epsilon_{\scriptsize miss}$$ if $D\geq D_{\epsilon_{\scriptsize miss}}$ and $W_s(t-D)\geq N_{\epsilon_{\scriptsize miss}}.$
\rightline{$\square$}

\emph{Remark 1:} While all WSL scheduling algorithms with good tie-breaking rules are throughput optimal, their performances in terms of other metrics could be different depending upon the tie-breaking rules. We consider two tie-breaking rules in this paper:
\begin{itemize}
\item {\bf Uniform Tie-breaking:} Among all short-lived flows satisfying $R_i(t)=\tilde{R}_i^{\max}(t)$ or $R_i(t)\geq Q_i(t),$ the base-station uniformly and randomly selects one to serve.

\item {\bf Oldest-first Tie-breaking:} Let $\beta_i$ denote the number of time slots a short-lived flow has been in the network. The base station keeps track $\tau_i=\min\{\bar{\tau}, \beta_i\}$ for every short-lived flow,  where $\bar{\tau}$ is some fixed positive integer. Among all short-lived flows satisfying $R_i(t)=\tilde{R}_i^{\max}(t)$ or $R_i(t)\geq Q_i(t),$ the tie-breaking rule selects the one with the largest $\tau_i,$ and the ties are broken uniformly and randomly.\footnote{We set a upper bound $\bar{\tau}$ on $\beta$ for technical reasons that facilitate the throughput-optimality proof. Since $\bar{\tau}$ can be arbitrarily large, we conjecture that this upper bound is only for analysis purpose, and not required in practical systems.}
\end{itemize}
The ``goodness'' of these two tie-breaking rules are proved in Appendix C and D, and the impact of the tie-breaking rules on performance is studied in Section \ref{sec: simu} using simulations.

\emph{Remark 2:} The $\alpha$ in inequality (\ref{eq: svsl}) is a parameter balancing the performance of long-lived flows and short-lived flows. A large $\alpha$ will lead to a small number of short-lived flows but large queue-lengths of long-lived flows, and vice versa.

\emph{Remark 3:} In Theorem \ref{thm: learning}, we will prove that WSL is throughput optimal when $D$ is sufficiently large. From purely throughput-optimality considerations, it is then natural to choose $D=\infty.$ However, in practical systems, if we choose $D$ too large, such as $\infty,$ then it is possible that a flow may stay in the system for a very long time if its best channel condition occurs extremely rarely. Thus, it is perhaps best to choose a finite $D$ to tradeoff between performance and throughput.

\emph{Remark 4:} If all flows are short-lived, then the algorithm simplifies as follows:  If at least one short-lived flow (say flow $i$) satisfies $R_i(t)\geq Q_i(t)$ or $R_i(t)=\tilde{R}^{\max}_i(t),$ then the base station selects such a flow for transmission according to a ``good'' tie-breaking rule; otherwise, the base station picks an arbitrary short-lived flow to serve.  Simply stated, the algorithm serves one of the flows which can be completely transmitted or sees its best channel state, where the best channel state is an estimate based on past observations. If no such flow exists, any flow can be served. We do not separately prove the throughput optimality of this scenario since it is a special case of the scenario considered here. But it is useful to note that, in the case of short-lived flows only, the algorithm does not consider backlogs at all in making scheduling decisions.

We will prove that WSL (with any $\alpha>0$) is throughput-optimal in the following sections, i.e., the scheduling policy can support any set of traffic flows that are supportable by any other algorithm. In the next section, we first present the necessary conditions for the stability, which also define the network throughput region.

\section{Necessary Conditions for Stability}
\label{sec: NC}
In this section, we establish the necessary conditions for the stability of networks with flow-level dynamics. To get the necessary condition, we need to classify the short-lived flows into different classes.
\begin{list}{\labelitemi}{\leftmargin=1em}
\item A short-lived flow class is defined by a pair of random variables $(\hat{R},\hat{F})$. Class-$k$ is associated with random variables $\hat{R}_k$ and $\hat{F}_k.$\footnote{We use  $\hat{\hspace{0.1in}}$  to indicate that the notation is associated with a class of short-lived flows instead of an individual short-lived flow.}  A short-lived flow $i$ belongs to class $k$ if $R_i(t)$ has the same distribution as $\hat{R}_k$ and the size of flow $i$ ($f_i$) has the same distribution as $\hat{F}_k.$ We let $\Lambda_k(t)$ denote the number of class-$k$ flows joining the network at time $t,$ where $\Lambda_k(t)$ are i.i.d. across time slots \begin{color}{black}and independent but not necessarily identical across classes, \end{color}and $\Ex[\Lambda_k(t)]=\lambda_k.$ Denote by $\cal K$ the set of distinct classes. We assume that $\cal K$ is finite, $|{\cal K}|=K,$ and $\Lambda_k[t]\leq \lambda^{\max}$ for all $t$ and $k\in {\cal K}.$

\item Let $\mathbf{c}$ denote an $L$-dimensional vector describing the state of the channels of the long-lived flows. In state $\mathbf{c},$ $R_{\mathbf{c}, l}$ is the service rate that long-lived flow $l$ can receive if it is scheduled. We denote by ${\cal C}$ the set of all possible states.

\item Let ${\bf C}(t)$ denote the state of the long-lived flows at time $t,$ and $\pi_{\mathbf{c}}$ denote the probability that ${\bf C}(t)$  is in state $\mathbf{c}.$

\item Let $p_{\mathbf{c}, l}$ be the probability that the base station serves flow $l$ when the network is in state $\mathbf{c}.$ Clearly, for any $\mathbf{c},$ we have
$$\sum_{l\in{\cal L}} p_{\mathbf{c}, l}\leq 1.$$ Note that the sum could be less than $1$ if the base station schedules a short-lived flow in this state.

\item Let $\mu_{\mathbf{c},s}$ be the probability that the base station serves a short-lived flow when the network is in state $\mathbf{c}.$

\item Let $\Theta_{k, \beta}(t)$ denote the number of short-lived flows that belong to class-$k$ and have residual size $Q(t)=\beta.$ Note that $\beta$ can only take on a finite number of values.
\end{list}

\begin{thm}
Consider traffic parameters $\{x_l\}$ and $\{\lambda_k\},$ and suppose that there exists a scheduling policy guaranteeing $$\lim_{t\rightarrow\infty}\Ex\left[\sum_{l\in{\cal L}} Q_l(t)+ \sum_{k\in{\cal K}}\sum_{\beta=1}^{F^{\max}} \Theta_{k,\beta}(t)\right]<\infty.$$  Then there exist $p_{\mathbf{c},l}$ and $\mu_{\mathbf{c},s}$ such that the following inequalities hold:
\begin{eqnarray}
&\displaystyle x_l \leq \sum_{\mathbf{c}\in{\cal C}} \pi_{\mathbf{c}} p_{\mathbf{c},l}R_{\mathbf{c},l}\hspace{0.1in}\forall l\in{\cal L}.\label{NC: L}\\
&\displaystyle \sum_{k\in{\cal K}} \lambda_k {\bf E}\left[\left\lceil\frac{\hat{F}_k}{\hat{R}^{\max}_k}\right\rceil\right]\leq \sum_{\mathbf{c}\in{\cal C}}\mu_{\mathbf{c},s}\pi_{\mathbf{c}}.&\label{NC: S2}\\
&\displaystyle \left(\sum_{l\in{\cal L}}p_{\mathbf{c},l}\right)+\mu_{\mathbf{c},s}\leq 1 \hbox{ } \forall c\in{\cal C}. &\label{NC: S}
\end{eqnarray} \rightline{$\square$}
\label{thm: NC}
\end{thm}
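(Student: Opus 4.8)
The plan is to obtain all three inequalities from rate‑conservation (flow‑balance) arguments, exploiting the hypothesis that $\lim_{t\to\infty}\Ex[\sum_{l}Q_l(t)+\sum_{k,\beta}\Theta_{k,\beta}(t)]<\infty$ forces every long‑run arrival rate to be matched by the corresponding long‑run service rate. Since the long‑lived channel process is i.i.d.\ with $\Pr(\mathbf C(t)=\mathbf c)=\pi_{\mathbf c}$, I first introduce, for each finite horizon $T$ and each state $\mathbf c$ with $\pi_{\mathbf c}>0$, the empirical service fractions
\[
p^{(T)}_{\mathbf c,l}=\frac{1}{T\pi_{\mathbf c}}\sum_{t=0}^{T-1}\Pr\!\big(\mathbf C(t)=\mathbf c,\ l\ \text{served}\big),
\]
\[
\mu^{(T)}_{\mathbf c,s}=\frac{1}{T\pi_{\mathbf c}}\sum_{t=0}^{T-1}\Pr\!\big(\mathbf C(t)=\mathbf c,\ \text{a short-lived flow served}\big),
\]
each lying in $[0,1]$. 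I would establish the three inequalities first for $p^{(T)}_{\mathbf c,l}$ and $\mu^{(T)}_{\mathbf c,s}$ at every large $T$, and then pass to a limit.

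For (\ref{NC: L}) I would write the long‑lived recursion $Q_l(t+1)=Q_l(t)+X_l(t)-d_l(t)$, where the departed bits obey $d_l(t)\le R_l(t)\,\mathbf 1\{l\ \text{served at}\ t\}$. Telescoping over $t<T$, taking expectations, dividing by $T$, and letting $T\to\infty$, the bounded boundary term $\Ex[Q_l(T)]/T$ vanishes and yields $x_l=\lim_T \tfrac1T\Ex[\sum_t d_l(t)]$. Conditioning the service bound on the channel state (so $R_l(t)=R_{\mathbf c,l}$ when $\mathbf C(t)=\mathbf c$) rewrites it as $\sum_{\mathbf c}R_{\mathbf c,l}\pi_{\mathbf c}\,p^{(T)}_{\mathbf c,l}$, giving $x_l\le\sum_{\mathbf c}R_{\mathbf c,l}\pi_{\mathbf c}\,p^{(T)}_{\mathbf c,l}$.

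The key step is (\ref{NC: S2}), where I track the \emph{workload} $W_s(t)=\sum_{i\in\mathcal I(t)}\lceil Q_i(t)/R_i^{\max}\rceil$ rather than bits. Because a served short‑lived flow loses at most $R_i^{\max}$ bits in a slot, its ceiling term falls by at most one, and since at most one flow is served per slot, $W_s$ decreases by at most $\mathbf 1\{\text{a short-lived flow served at}\ t\}$; meanwhile each arriving class‑$k$ flow adds exactly $\lceil f_i/R_i^{\max}\rceil$, of mean $\Ex[\lceil\hat F_k/\hat R_k^{\max}\rceil]$. This produces a balance $W_s(t+1)=W_s(t)+A_s(t)-S_s(t)$ with $S_s(t)\le\mathbf 1\{\text{short-lived served}\}$ and $\Ex[A_s(t)]=\sum_k\lambda_k\Ex[\lceil\hat F_k/\hat R_k^{\max}\rceil]$. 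Since $W_s(t)\le F^{\max}\sum_{k,\beta}\Theta_{k,\beta}(t)$ has bounded expectation, the same telescoping forces the workload arrival and departure rates to coincide in the limit, yielding $\sum_k\lambda_k\Ex[\lceil\hat F_k/\hat R_k^{\max}\rceil]\le\sum_{\mathbf c}\pi_{\mathbf c}\,\mu^{(T)}_{\mathbf c,s}$.

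Finally, (\ref{NC: S}) is the slot‑by‑slot single‑server bound $\sum_l\mathbf 1\{l\ \text{served}\}+\mathbf 1\{\text{short-lived served}\}\le1$; conditioning on $\mathbf C(t)=\mathbf c$, averaging over $t<T$, and dividing by $\pi_{\mathbf c}$ gives $\sum_l p^{(T)}_{\mathbf c,l}+\mu^{(T)}_{\mathbf c,s}\le1$ exactly. Because $(p^{(T)}_{\mathbf c,l},\mu^{(T)}_{\mathbf c,s})$ ranges over the compact cube $[0,1]^{|\mathcal C|(L+1)}$, I extract a subsequence $T_n\to\infty$ along which it converges to a single limit $(p_{\mathbf c,l},\mu_{\mathbf c,s})$; passing to this limit in the three bounds simultaneously delivers (\ref{NC: L}), (\ref{NC: S2}), and (\ref{NC: S}). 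I expect the workload accounting for (\ref{NC: S2}) to be the main obstacle: one must argue carefully that the ceiling‑based workload of the served flow drops by at most one regardless of the realized rate $R_i(t)\le R_i^{\max}$, and that flow departures and the time‑varying membership of $\mathcal I(t)$ are absorbed correctly into the balance equation. The subsequence extraction is routine once boundedness is in hand, and (\ref{NC: L}) is the standard rate‑conservation computation.
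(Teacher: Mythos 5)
Your argument is correct, but it takes a genuinely different route from the paper. The paper does not prove the theorem directly at all: it argues the contrapositive, asserting that if no feasible $(p_{\mathbf{c},l},\mu_{\mathbf{c},s})$ exist then the Strict Separation Theorem yields a hyperplane separating the arrival vector from the achievable service-rate region, from which one builds a linear/quadratic Lyapunov function whose drift is bounded below by a positive constant under \emph{every} policy, implying instability (the details are omitted and deferred to the analogous proof in \cite{ErySriPer_05}). You instead prove the stated implication directly: you define empirical per-state service fractions over a horizon $T$, use telescoping of $Q_l$ and of the ceiling-based workload $W_s$ together with the bounded-expectation hypothesis to kill the boundary terms, and extract a convergent subsequence in the compact cube to produce the required $(p_{\mathbf{c},l},\mu_{\mathbf{c},s})$. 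Your accounting for (\ref{NC: S2}) is sound: since the served flow's residual drops by at most $R_i^{\max}$ bits and a departing flow necessarily has ceiling term equal to one, $W_s$ decreases by at most the indicator that a short-lived flow is served, and $W_s(t)\leq F^{\max}\sum_{k,\beta}\Theta_{k,\beta}(t)$ ties its expectation to the hypothesis; Wald's identity gives the arrival-side mean. The trade-off: your flow-conservation argument is more elementary, constructive (it exhibits the multipliers as limits of empirical service fractions), and proves exactly what is stated, whereas the paper's separation/Lyapunov route delivers a quantitatively stronger instability conclusion (uniformly positive drift for all policies outside the region), which is the form needed if one wants to assert that the region is sharp. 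One point worth making explicit in a final write-up is that the hypothesis $\lim_{t\to\infty}\Ex[\cdot]<\infty$ implies the sequence of expectations is bounded, which is what actually makes the boundary terms $\Ex[Q_l(T)]/T$ and $\Ex[W_s(T)]/T$ vanish; and for states with $\pi_{\mathbf{c}}=0$ one simply sets $p_{\mathbf{c},l}=\mu_{\mathbf{c},s}=0$.
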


Inequality (\ref{NC: L}) and (\ref{NC: S2}) state that the service allocated should be no less than the user requests if the flows are supportable. Inequality (\ref{NC: S}) states that the overall time used to serve long-lived and short-lived flows should be no more than the time available. \begin{color}{black} To prove this theorem, it can be shown that for any traffic for which we cannot find $p_{\mathbf{c},l}$ and $\mu_{\mathbf{c},s}$ satisfying the three inequalities in the theorem, a Lyapunov function can be constructed such that the expected drift of the Lyapunov function is larger than some positive constant under any scheduling algorithm, which implies the instability of the network. \end{color} The complete proof is based on the Strict Separation Theorem and is along the lines of a similar proof in \cite{ErySriPer_05}, \begin{color}{black}and is omitted in this paper. \end{color}

\section{Throughput Optimality of WSL}
\label{sec: th}

First, we provide some intuition into how one can derive the WSL algorithm from optimization decomposition considerations. Then, we will present our main throughput optimality results.
Given traffic parameters $\{x_l\}$ and $\{\lambda_k\},$ the necessary conditions for the supportability of the traffic is equivalent to the feasibility of the following constraints:
\begin{eqnarray}
& x_l\leq \sum_{\mathbf{c}\in{\cal C}} \pi_{\mathbf{c}} p_{\mathbf{c},l} R_{\mathbf{c},l}& \forall l\nonumber\\
&\sum_{k\in {\cal K}} \lambda_k \Ex\left[\left\lceil\frac{\hat{F}_k}{\hat{R}^{\max}_k}\right\rceil\right]\leq \sum_{\mathbf{c}\in{\cal C}} \mu_{\mathbf{c},s}\pi_{\mathbf{c}}\label{NC: SF}\\
&\sum_{l\in{\cal L}} p_{\mathbf{c},l}+\mu_{\mathbf{c},s}\leq 1 &\forall \mathbf{c}.\nonumber
\end{eqnarray}
For convenience, we view the feasibility problem as an optimization problem with the objective $\max A,$ where $A$ is some constant. While we have not explicitly stated that the $x$'s and $\mu$'s are non-negative, this is assumed throughout.

Partially augmenting the objective using Lagrange multipliers, we get
\begin{eqnarray*}
&\max A -\sum_{l\in{\cal L}} q_l(x_l-\sum_c \pi_{\mathbf{c}} p_{\mathbf{c},l}R_{\mathbf{c},l})-\\
&q_s\left(\sum_{k\in{\cal K}} \lambda_k \Ex\left[\left\lceil\frac{\hat{F}_k}{\hat{R}^{\max}_k}\right\rceil\right]- \sum_{\mathbf{c}\in{\cal C}} \mu_{\mathbf{c},s}\pi_{\mathbf{c}}\right)\\
s.t.&\sum_{l\in{\cal L}} p_{\mathbf{c},l}+\mu_{\mathbf{c},s}\leq 1 \hbox{ }\forall \mathbf{c}.
\end{eqnarray*} For the moment, let us assume Lagrange multipliers $q_l$ and $q_s$ are given. Then the maximization problem above can be decomposed into a collection of optimization problems, one for each $\mathbf{c}:$
\begin{eqnarray*}
&\displaystyle \max_{\begin{color}{black}p_{\mathbf{c},l}, \mu_{\mathbf{c},s}\end{color}} \sum_{l\in{\cal L}} q_l R_{\mathbf{c},l} p_{\mathbf{c},l} + q_s \mu_{\mathbf{c},s}\\
s.t.&\sum_{l\in{\cal L}} p_{\mathbf{c},l}+\mu_{\mathbf{c},s}\leq 1.
\end{eqnarray*}
It is easy to verify that one optimal solution to the optimization problem above is:
\begin{list}{\labelitemi}{\leftmargin=1em}
\item if $q_s>\max_{l\in{\cal L}} q_l R_{\mathbf{c},l},$ then $\mu_{\mathbf{c},s}=1$ and  $p_{\mathbf{c},l}=0 (\forall l);$
\item otherwise,
$\mu_{\mathbf{c},s}=0,$ and $p_{\mathbf{c},l^*}=1$ for some $l^*\in\arg\max q_l R_{\mathbf{c},l}$ and $p_{\mathbf{c},l}=0$   for other $l.$
\end{list}

The complementary slackness conditions give
$$q_l\left(x_l - \sum_{\mathbf{c}\in{\cal C}} \pi_{\mathbf{c}} p_{\mathbf{c},l} R_{\mathbf{c},l}\right)=0.$$ Since $x_l$ is the mean
arrival rate of long-lived flow $l$ and $\sum_{\mathbf{c}\in{\cal C}} \pi_{\mathbf{c}} p_{\mathbf{c},l} R_{\mathbf{c},l}$ is the mean service rate,
the condition on $q_l$ says that if the mean arrival rate is less than the mean service rate, $q_l$ is equal to zero. Along with the non-negativity condition on $q_l,$ this suggests that perhaps $q_l$ behaves likes a queue with these arrival and service rates. Indeed, it turns out that the mean of the queue lengths are proportional to Lagrange multipliers (see the surveys in \cite{linshrsri06,geoneetas06,ShaSri_07}). For long-lived flow $l,$ we can treat the queue-length $Q_l(t)$ as a time-varying estimate of Lagrange multiplier $q_l.$ Similarly $q_s$ can be associated with a queue whose arrival rate is $\sum_{k\in{\cal K}} \lambda_k \Ex\left[\left\lceil\frac{\hat{F}_k}{\hat{R}^{\max}_k}\right\rceil\right],$ which is the mean rate at which workload arrives where workload is
measured by the number of slots needed to serve a short-lived flow if it is served when its channel condition is the best. The service rate is $\sum_{\mathbf{c}\in{\cal C}} \mu_{\mathbf{c},s}\pi_{\mathbf{c}}$ which is the rate at which the workload can potentially decrease when a short-lived flow is picked for scheduling by the base station. Thus, the workload in the system can serve as a dynamic estimate of $q_s.$

Letting $\alpha W_s(t)$ ($\alpha>0$) be an estimate of $q_s,$ the observations above suggest the following workload-based scheduling algorithm if $R_i^{\max}$ are known.

\noindent{\bf Workload-based Scheduling (WS):}
During each time slot, the base station checks the following inequality:
\begin{eqnarray}
\alpha W_s(t) > \max_{l\in{\cal L}} Q_l(t) R_l(t). \label{eq: svsl1}
\end{eqnarray}
\begin{list}{\labelitemi}{\leftmargin=1em}
\item  If inequality (\ref{eq: svsl1}) holds, then the base station serves a short-lived flow as follows:  if at least one short-lived flow (say flow $i$) satisfies $R_i(t)\geq Q_i(t)$ or $R_i(t)={R}^{\max}_i,$ then such a flow is selected for transmission (ties are broken arbitrarily); otherwise, the base station picks an arbitrary short-lived flow to serve.

\item If inequality (\ref{eq: svsl1}) does not hold, then the base station serves a long-lived flow $l^*$ such that $l^*\in \arg\max_{l\in{\cal L}} Q_l(t) R_l(t)$ (ties are broken arbitrarily).

\item The factor $\alpha$ can be obtained from the optimization formulation by multiplying constraint (\ref{NC: SF}) by $\alpha$ on both sides
\end{list}
\rightline{$\square$}

However, this algorithm which was directly derived from dual decomposition considerations is not implementable since $R_i^{\max}$'s are unknown. So WSL uses $\tilde{R}_i^{\max}(t)$ to approximate $R_i^{\max}.$ Note that an inaccurate estimate of ${R}_i^{\max}$ not only affects the base station's decision on whether $R_i(t)=R_i^{\max},$ but also on its computation of $\left\lceil\frac{Q_i(t)}{{R}_i^{\max}}\right\rceil.$ However, it is not difficult to see that the error in the estimate of the total workload is a small fraction of the total workload when the total workload is large: when the workload is very large, the total number of short-lived flows is large since their file sizes are bounded. Since the arrival rate of short-lived flows is also bounded, this further implies that the majority of short-lived flows must have arrived a long time ago which means that with high probability, their estimate of their best channel condition must be correct.

Next we will prove that both WS and WSL can stabilize any traffic $x_l$ and $\lambda_k$ such that $(1+\epsilon)x_l$ and $(1+\epsilon)\lambda_k$ are {\em supportable,} i.e., satisfying the conditions presented in Theorem \ref{thm: NC}. In other words, the number of short-lived flows in the network and the queues for long-lived flows are all bounded. Even though WS is not practical, we study it first since the proof of its throughput optimality is easier and provides insight into the proof of throughput-optimality of WSL.

Let $$\mathbf{M}(t)=\left(\{Q_l(t)\}_{l\in{\cal L}}, \{\Theta_{k, \beta}(t)\}_{k\in{\cal K}, 1\leq \beta\leq F^{\max}}\right).$$ Since the base station makes decisions on $\mathbf{M}(t)$ and $\mathbf{R}(t)=\{\{R_i(t)\}_{i\in{\cal I}(t)}, \{R_l(t)\}_{l\in{\cal L}}\}$ under WS. It is easy to verify that $\mathbf{M}(t)$ is a \emph{finite-dimensional} Markov chain under WS. Assume that $\Lambda_k(t)$, $\hat{F}_k$ and $X_l(t)$ are such that the Markov chain $\mathbf{M}$ is \emph{irreducible} and \emph{aperiodic}.

\begin{thm}
Given any traffic $x_l$ and $\lambda_k$ such that $(1+\epsilon)x_l$ and $(1+\epsilon)\lambda_k$ are supportable, the Markov chain $\mathbf{M}(t)$ is \emph{positive-recurrent} under WS, and
$$\lim_{t\rightarrow\infty} \Ex\left[ \sum_{l\in {\cal L}} Q_l(t) + \sum_{i\in {\cal I}(t)} Q_i(t) \right] < \infty.$$
\label{thm: ori}
\end{thm}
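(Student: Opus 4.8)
The plan is to prove positive recurrence through the Foster--Lyapunov criterion, using the quadratic Lyapunov function
$$V(\mathbf{M}(t)) = \sum_{l\in{\cal L}} Q_l^2(t) + \alpha W_s^2(t),$$
where $W_s(t)=\sum_{i\in{\cal I}(t)}\lceil Q_i(t)/R_i^{\max}\rceil$ is the \emph{exact} workload (legitimate here because WS knows $R_i^{\max}$). The weights $Q_l^2$ and $\alpha W_s^2$ are dictated by the decomposition in Section \ref{sec: th}: they make the per-slot ``weight'' of serving long-lived flow $l$ equal to $Q_l R_l$ and that of serving an eligible short-lived flow equal to $\alpha W_s$, which is precisely the comparison WS performs in (\ref{eq: svsl1}).

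First I would bound the one-step conditional drift $\Ex[V(t+1)-V(t)\mid\mathbf{M}(t)]$ term by term. The long-lived queues obey $Q_l(t+1)=[Q_l(t)-R_l(t)\mathbb{1}\{l\text{ served}\}]^++X_l(t)$, giving the standard bound $2\sum_l Q_l(t)(x_l-\Ex[R_l(t)\mathbb{1}\{l\text{ served}\}])$ plus a constant from the bounded increments $X^{\max},R^{\max}$. For the workload, arrivals contribute mean rate $a_s:=\sum_k\lambda_k\Ex[\lceil \hat F_k/\hat R_k^{\max}\rceil]$, while serving \emph{any} short-lived flow never increases $W_s$ and serving an \emph{eligible} flow (one with $R_i(t)\ge Q_i(t)$ or $R_i(t)=R_i^{\max}$) decreases it by at least one, since $\lceil (Q_i-R_i^{\max})/R_i^{\max}\rceil=\lceil Q_i/R_i^{\max}\rceil-1$ and a completed flow drops its entire contribution. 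This yields a workload drift bounded by $2\alpha W_s(t)(a_s-\Ex[\mathrm{dep}_s])$ plus a constant, so the positive part of the total drift is $2(\sum_l Q_l x_l+\alpha W_s a_s)$ and the negative part is $-2\big(\sum_l Q_l\Ex[R_l\mathbb{1}\{l\text{ served}\}]+\alpha W_s\Ex[\mathrm{dep}_s]\big)$.

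The core step is to lower bound the negative (service) terms using supportability. Because WS serves short-lived traffic exactly when $\alpha W_s(t)>\max_l Q_l(t)R_l(t)$ and otherwise serves $\arg\max_l Q_l(t)R_l(t)$, for each realized channel state it attains the maximum weight $\max\big(\max_l Q_l R_l,\ \alpha W_s\big)$ among all single-slot allocations. I would compare WS against the stationary randomized policy guaranteed by Theorem \ref{thm: NC} applied to the scaled rates $(1+\epsilon)x_l,(1+\epsilon)\lambda_k$: using $\sum_l p_{\mathbf c,l}+\mu_{\mathbf c,s}\le 1$ its per-state weight $\sum_l Q_l R_{\mathbf c,l}p_{\mathbf c,l}+\alpha W_s\mu_{\mathbf c,s}$ never exceeds $\max(\max_l Q_l R_{\mathbf c,l},\alpha W_s)$, so WS dominates it state by state; averaging and invoking the two supportability inequalities gives expected weight at least $(1+\epsilon)\big(\sum_l Q_l x_l+\alpha W_s a_s\big)$. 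Substituting collapses the drift to $-2\epsilon\big(\sum_l Q_l x_l+\alpha W_s a_s\big)+C$.

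The step I expect to be the main obstacle is reconciling the ``ideal'' weight $\alpha W_s$ with the realized workload departure, because when WS elects to serve short-lived traffic it is possible that \emph{no} eligible flow exists, so the actual departure is zero. I would control this via the event ${\cal A}(t)=\{\text{some flow has }R_i\ge Q_i\text{ or }R_i=R_i^{\max}\}$: since each flow contributes at most $F^{\max}$ to $W_s$, we have $I(t)\ge W_s(t)/F^{\max}$, and by independence across flows with $\Pr(R_i=R_i^{\max})\ge p^{\max}_s$, $\Pr({\cal A}^c(t))\le (1-p^{\max}_s)^{I(t)}\le (1-p^{\max}_s)^{W_s(t)/F^{\max}}$. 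The weight lost on ${\cal A}^c(t)$ is at most $\alpha W_s(t)(1-p^{\max}_s)^{W_s(t)/F^{\max}}$, which is bounded uniformly in $W_s(t)$ and therefore absorbed into $C$. With the drift shown to be at most $-2\epsilon(\sum_l Q_l x_l+\alpha W_s a_s)+C'$, it is bounded above by a negative constant outside a finite set, so (given the assumed irreducibility and aperiodicity) Foster--Lyapunov yields positive recurrence; the same inequality, by the standard drift-to-moment argument, gives $\Ex[\sum_l Q_l+W_s]<\infty$ in steady state, and since $Q_i\le R^{\max}\lceil Q_i/R_i^{\max}\rceil$ implies $\sum_i Q_i\le R^{\max}W_s$, the stated bound on $\Ex[\sum_l Q_l+\sum_i Q_i]$ follows.
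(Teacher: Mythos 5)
Your proposal is correct and follows essentially the same route as the paper's Appendix A: the same Lyapunov function $\alpha W_s^2+\sum_{l}Q_l^2$, the same comparison of WS's realized weight $\max\left(\max_l Q_l R_l,\ \alpha W_s\right)$ against the stationary randomized policy furnished by the supportability conditions of Theorem \ref{thm: NC}, and the same bound $(1-p_s^{\max})^{W_s/F^{\max}}$ on the probability that no short-lived flow is eligible. The only (valid) deviation is bookkeeping: where the paper runs a three-case analysis with explicit thresholds $U_W,U_Q$, you observe that the weight forgone on the bad event is at most $\alpha W_s$ (since WS only loses when it has already elected to serve a short-lived flow, so $\max_l Q_lR_l<\alpha W_s$), making the expected loss $\alpha W_s(1-p_s^{\max})^{W_s/F^{\max}}$ uniformly bounded and absorbable into the drift constant.
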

\begin{proof}
We consider the following Lyapunov function:
\begin{equation}
V(t)=\alpha\left(W_s(t)\right)^2+\sum_{l\in{\cal L}}(Q_l(t))^2,
\end{equation}
and prove that
\begin{eqnarray*}
\Ex[V(t+1)-V(t)|{\bf M}(t)] \leq U_d1_{{\bf M}(t)\in\Upsilon} -\frac{\epsilon}{2}\left[\alpha \bar{\lambda}W_s(t) \right.\\
+\left. \sum_{l\in \cal{L}} Q_l(t)x_l \right] 1_{{\bf M}(t)\not\in\Upsilon}
\end{eqnarray*} for some $U_d>0,$ $\epsilon>0$, $\bar{\lambda}>0$, and a finite set $\Upsilon.$ Positive recurrence of $\mathbf{M}$ then follows from Foster's Criterion for Markov chains \cite{asm03}, and the boundedness of the first moment follows from \cite{MeyTwe_09}. The detailed proof is presented in Appendix A.

\end{proof}

We next study WSL, where $R_i^{\max}$ is estimated from the history. We define $\Theta_{k, \beta, r}(t)$ to be the number of short-lived flows that belong to class-$k,$ have a residual size of $\beta,$ and have $\tilde{R}_i^{\max}(t)=r.$ Furthermore, we define $$\tilde{\mathbf{M}}(n)=\left(\{Q_l(t)\}_{l\in{\cal L}}, \{\Theta_{k, \beta, r}(t)\}_{\substack{k\in{\cal K}\\ 1\leq \beta\leq F^{\max}\\ 1\leq r\leq \hat{R}^{\max}_k}}\right)_{(n-1)T+1\leq t\leq nT}$$ from some $T\geq D.$ It is easy to see that $\tilde{\mathbf{M}}(n)$ is a \emph{finite-dimensional} Markov chain under WSL.\footnote{This Markov chain is well-defined under the uniform tie-breaking rule. For other good tie-breaking rules, we may need to first slightly change the definition of $\tilde{M}(n)$ to include the information required for tie-breaking, and then use the analysis in Appendix B to prove the positive recurrence.}

\begin{thm}
Consider traffic $x_l$ and $\lambda_k$ such that $(1+\epsilon)x_l$ and $(1+\epsilon)\lambda_k$ are supportable. Given WSL with a \emph{good} tie-breaking rule,  there exists $D_\epsilon$ such that the Markov chain $\tilde{\mathbf{M}}(n)$ is \emph{positive-recurrent} under the WSL with learning period $D\geq D_\epsilon$ and the given tie-breaking rule. Further,
$$\lim_{t\rightarrow\infty} \Ex\left[ \sum_{l\in {\cal L}} Q_l(t) + \sum_{i\in {\cal I}(t)} Q_i(t) \right] < \infty.$$
\label{thm: learning}
\end{thm}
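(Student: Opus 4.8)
The plan is to prove positive recurrence of $\tilde{\mathbf{M}}(n)$ by exhibiting negative Lyapunov drift for the same function used for WS, namely $V(t)=\alpha(W_s(t))^2+\sum_{l\in{\cal L}}(Q_l(t))^2$ built from the \emph{true} workload $W_s(t)=\sum_{i\in{\cal I}(t)}\lceil Q_i(t)/R_i^{\max}\rceil$, and then invoke Foster's criterion together with the moment bound of \cite{MeyTwe_09}. Since $\tilde{\mathbf{M}}(n)$ is a sampled (windowed) chain, I would take $V(nT)$ as the Lyapunov function of the sampled chain and work with the $T$-slot drift $\Ex[V((n+1)T)-V(nT)\mid \tilde{\mathbf{M}}(n)]$, obtained by summing per-slot drifts across the window. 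The strategy is to write this drift as the idealized drift that WS would produce — for which Theorem \ref{thm: ori} already gives $-\frac{\epsilon}{2}[\alpha\bar{\lambda} W_s(t)+\sum_l Q_l(t)x_l]$ outside a finite set — plus error terms that quantify the two ways WSL deviates from WS: the mismatch between the estimated workload $\tilde{W}_s(t)=\sum_i\lceil Q_i(t)/\tilde{R}_i^{\max}(t)\rceil$ used in the scheduling test (\ref{eq: svsl}) and the true $W_s(t)$, and the ``miss'' slots in which a served short-lived flow has $\tilde{R}_i^{\max}(t)\neq R_i^{\max}$ and therefore fails to retire one unit of true workload.

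The core estimate is that these error terms are a vanishing fraction of $W_s(t)$ once the workload and $D$ are large. I would control the flows with incorrect estimates by splitting ${\cal I}(t)$ into flows that arrived within the last $D$ slots and flows present for longer. The first group has size at most $KD\lambda^{\max}$, a constant independent of $W_s(t)$; each flow in the second group has window $[t-D,t]$ contained in its lifetime, so it has an incorrect estimate with probability at most $(1-p^{\max}_s)^{D}$, which is made arbitrarily small by enlarging $D$. Since each active flow contributes at least $1$ and at most $F^{\max}$ to $W_s(t)$, we have $I(t)\le W_s(t)\le F^{\max} I(t)$, so a large workload forces a large flow count and the bad-estimate flows constitute only a small fraction of the total plus an additive constant. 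This yields $0\le \tilde{W}_s(t)-W_s(t)\le \delta(D)W_s(t)+c$ with $\delta(D)\to 0$, and simultaneously bounds the expected number of miss slots in a window through the \emph{good} tie-breaking property, which guarantees $\Pr({\cal E}_{\scriptsize miss}(t))\le\epsilon_{\scriptsize miss}$ whenever $D\ge D_{\epsilon_{\scriptsize miss}}$ and $W_s(t-D)\ge N_{\epsilon_{\scriptsize miss}}$.

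With these bounds in hand I would choose $\epsilon_{\scriptsize miss}$ and $D_\epsilon$ so that the accumulated error terms are absorbed into at most half of the idealized negative drift, leaving a net bound of the form $-\frac{\epsilon}{4}[\alpha\bar{\lambda} W_s(t)+\sum_l Q_l(t)x_l]$ outside a finite set, plus a bounded positive constant inside it. Positive recurrence of $\tilde{\mathbf{M}}(n)$ then follows from Foster's criterion, and the finite limiting expectation follows from the moment bound, exactly as in Theorem \ref{thm: ori}.

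The main obstacle is making the conditional ``good'' tie-breaking guarantee, which is phrased in terms of $W_s(t-D)$ at a single time, interact cleanly with the windowed drift computation. Two issues must be reconciled: the hypothesis of the good rule involves the workload $D$ slots in the past, whereas the drift is expressed in the current workload, and the miss bound only holds once $W_s(t-D)\ge N_{\epsilon_{\scriptsize miss}}$. Since per-slot workload changes are bounded (arrivals are at most $KF^{\max}\lambda^{\max}$ and at most one flow is served), I would relate the two via $W_s(t-D)\ge W_s(t)-O(D)$, so that the set where the good guarantee fails can be enlarged to a still-finite set in the drift inequality; the remaining care is to ensure the additive constant $c$ and the window length $T\ge D$ do not inflate the exceptional set beyond finiteness.
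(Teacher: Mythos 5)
Your proposal follows essentially the same route as the paper's Appendix B: the same quadratic Lyapunov function in the true workload applied to the $T$-slot sampled chain, the same decomposition of the drift into the WS drift from Theorem \ref{thm: ori} plus a WS-versus-WSL difference term, the same two error sources (workload-estimate mismatch controlled by splitting flows by age, and miss slots controlled by the good tie-breaking property), and the same device of relating $W_s(t-D)$ to $W_s(t)$ through bounded per-slot changes before invoking Foster's criterion. The only cosmetic difference is that the paper organizes the difference term into an explicit three-case analysis (both schedulers pick long-lived, both pick short-lived, or they disagree) and states the workload-mismatch bound in expectation rather than pathwise, but these are presentational rather than substantive distinctions.
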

\begin{proof}
The proof of this theorem is built upon the following two facts:
\begin{list}{\labelitemi}{\leftmargin=1em}
\item When the number of short-lived flows is large, the majority of short-lived flows must have been in the network for a long time and have obtained the correct estimate of the best channel condition, which implies that $$\sum_{i\in {\cal I}(t)} \left\lceil\frac{Q_i(t)}{{R}_i^{\max}}\right\rceil \approx \sum_{i\in {\cal I}(t)} \left\lceil\frac{Q_i(t)}{\tilde{R}_i^{\max}(t)}\right\rceil.$$

\item When the number of short-lived flows is large, the short-lived flow selected by the base station (say flow $i$) has a high probability to satisfy $R_i(t)=R_i^{\max}$ or $R_i(t)\geq Q_i(t).$
\end{list}

From these two facts, we can prove that with a high probability, the scheduling decisions of WSL are the same as those of WS, which leads to the throughput optimality of WSL. The detailed proof is presented in Appendix B.

\end{proof}

\section{Simulations}
\label{sec: simu}

In this section, we use simulations to evaluate the performance of different variants of WSL and compare it to other scheduling policies. There are three types of flows used in the simulations:
\begin{list}{\labelitemi}{\leftmargin=1em}
\item {\bf S-flow: } An S-flow has a finite size, generated from a truncated exponential distribution with mean value $30$ and maximum value $150.$ Non-integer values are rounded to integers.

\item {\bf M-flow: } An M-flow keeps injecting bits into the network for $10, 000$ time slots and stops.  The number of bits generated at each time slot follows a Poisson distribution with mean value $1.$

\item {\bf L-flow:} An L-flow keeps injecting bits into the network and never leaves the network. The number of bits generated at each time slot follows a truncated Poisson distribution with mean value $1$ \begin{color}{black}and maximum value $10$.\end{color}
\end{list}
Here S-flows represent short-lived flows that have finite sizes and whose bits arrive all at once; L-flows represent long-lived flows that continuously inject bits and never leave the network; and M-flows represent flows of finite size but whose arrival rate is controlled at their sources so that they do not arrive instantaneously into the network. Our simulation will demonstrate the importance of modeling very large, but finite-sized flows as long-lived flows.

We assume that the channel between each user and the base station is distributed according to one of the following three distributions:
\begin{list}{\labelitemi}{\leftmargin=1em}
\item {\bf G-link: } A G-link has five possible link rates $\{10, 20, 30, 40, 50\},$ and each of the states happens with probability $20\%.$

\item {\bf P-link: } A P-link has five possible link rates $\{5, 10, 15, 20, 25\},$ and each of the states happens with probability $20\%.$

\item {\bf R-link:} An R-link has five possible link rates $\{10, 20, 30, 40, 100\},$ and the probabilities associated with these link states are $\{0.5, 0.2, 0.2, 0.09, 0.01\}.$
\end{list}
The G, P and R stand for Good, Poor and Rare, respectively. We include these three different distributions to model the SNR variations among the users, where G-links represent links with high SNR (e.g., those users close to the base station), P-links represent links with low SNR (e.g., those users far away from the base station), and R-links represent links whose best state happens rarely. The R-links will be used to study the impact of learning period $D$ on the network performance.

We name the WSL with the uniform tie-breaking rule WSLU, and the WSL with the oldest-first tie-breaking rule WSLO. In the following simulations, we will first demonstrate that the WSLU performs significantly better than previously suggested algorithms, and then show that the performance can be further improved by choosing a good tie-breaking policy (e.g., WSLO).  We set $\alpha$ to be $50$ in all the following simulations.

\subsection*{Simulation I: Short-lived Flow or Long-lived Flow?}
We first use the simulation to demonstrate the importance of considering a flow with a large number of packets as being long-lived. We consider a network consisting of multiple S-flows and three M-flows, where the arrival of S-flows follows a truncated Poisson process with maximum value $100$ and mean value $\lambda.$ All the links are assumed to be G-links. We evaluate the following two schemes:
\begin{list}{\labelitemi}{\leftmargin=1em}
\item {\bf Scheme-1:} Both S-flows and M-flows are considered to be short-lived flows.

 \item {\bf Scheme-2:} An M-flow is considered to be long-lived before its last packet arrives, and to be short-lived after that.
\end{list}

The performance of these two schemes are shown in Figure \ref{fig: switch}, where WS with Uniform Tie-breaking Rule is used as the scheduling algorithm. We can see that the performances are substantially different (note that the network is stable under both schemes). The number of queued bits of M-flows under Scheme-1 is larger than that under Scheme-2 by \emph{two orders of magnitude.} This is because even an M-flow contains a huge number of bits ($10,000$ on average), it can be served only when the link rate is $50$ under Scheme-1.  This simulation suggests that when the performance we are interested is at a small scale (e.g. acceptable queue-length being less than or equal to $100$) compared with the size of the flow (e.g., $10^4$ in this simulation), the flow should be viewed as a long-lived flow for performance purpose.

\begin{figure}[h]
\begin{center}
\includegraphics[width=2.6in]{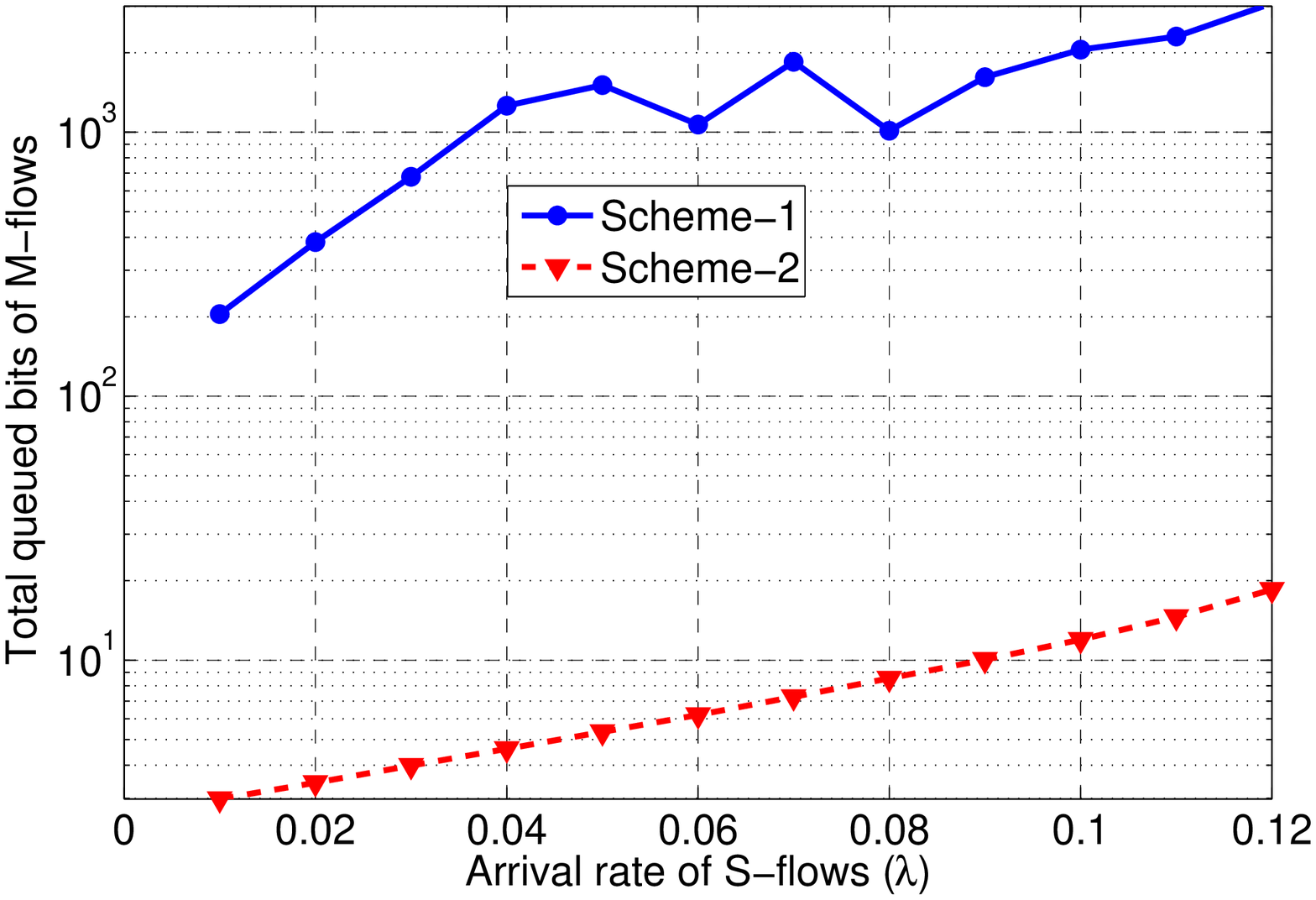}
\includegraphics[width=2.6in]{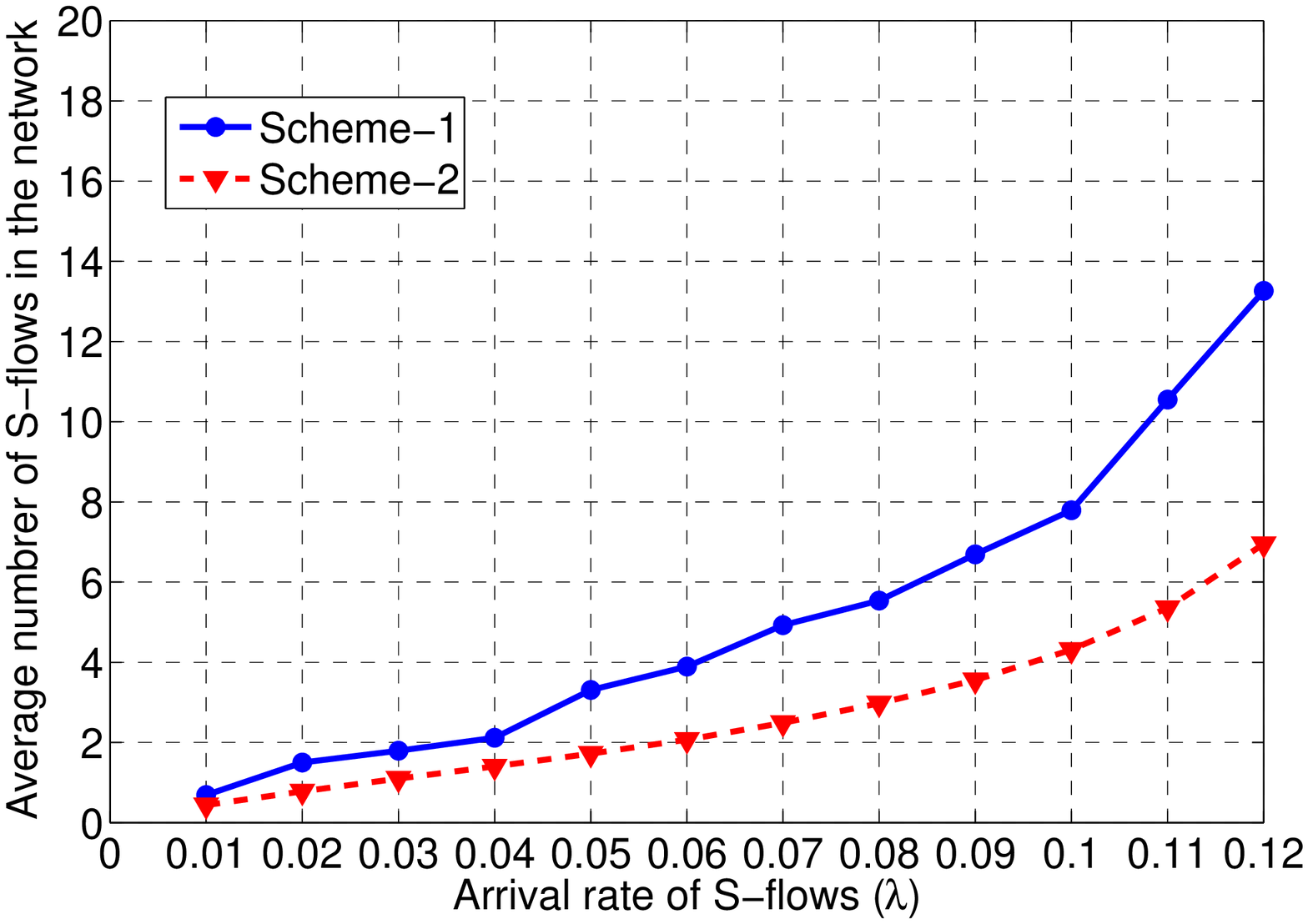}
\end{center}
\caption{Scheme-1 treats M-flows as short-lived flows, and Scheme-2 treats M-flows as long-lived flows}
\label{fig: switch}
\end{figure}

\subsection*{Simulation II: The Impact of Learning Period $D$}
In this simulation, we investigate the impact of $D$ on the performance of WSLU. Recall that it is nature to choose $D=\infty$ for purely throughput-optimality considerations, but the disadvantage is that a flow may stay in the network for a very long time if the best link state occurs very rarely. We consider a network consisting of S-flows, which arrive according to a truncated Poisson process with maximum value $100$ and mean $\lambda,$ and three L-flows. All links are assumed to be R-links. Figure \ref{fig: D} depicts the mean and standard deviation of the file-transfer delays with $D=16$ and $D=\infty$ \begin{color}{black} when the traffic load is light or medium.\end{color} As we expected, the standard deviation under WSLU with $D=\infty$ is significantly larger than that under WSLU with $D=16$ when $\lambda$ is large. This occurs because the best link rate $100$ occurs with a probability $0.01.$ This simulation confirms that in practical systems, we may want to choose a finite $D$ to get desired performance.
\begin{figure}[h]
\begin{center}
\includegraphics[width=2.6in]{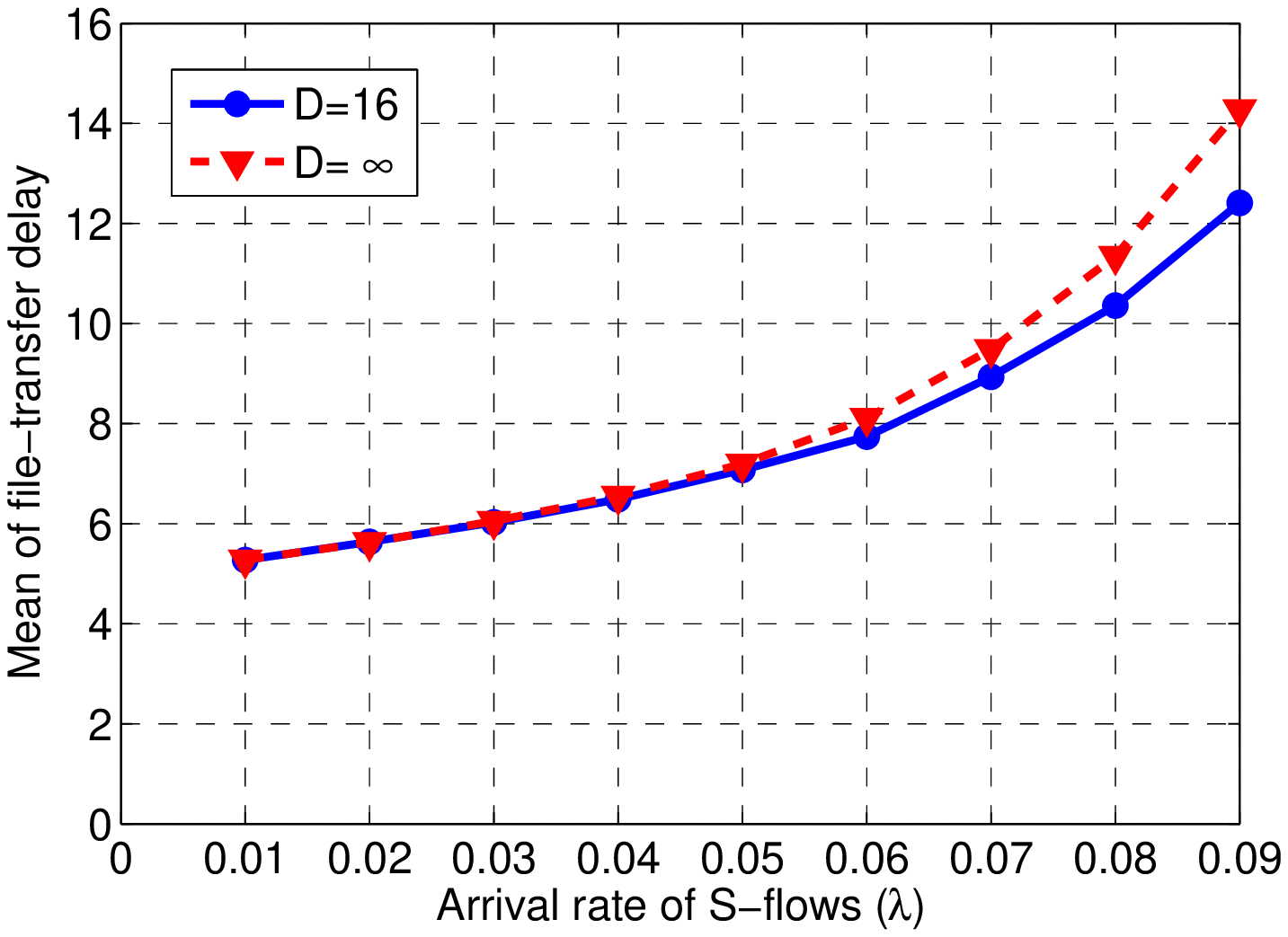}
\includegraphics[width=2.6in]{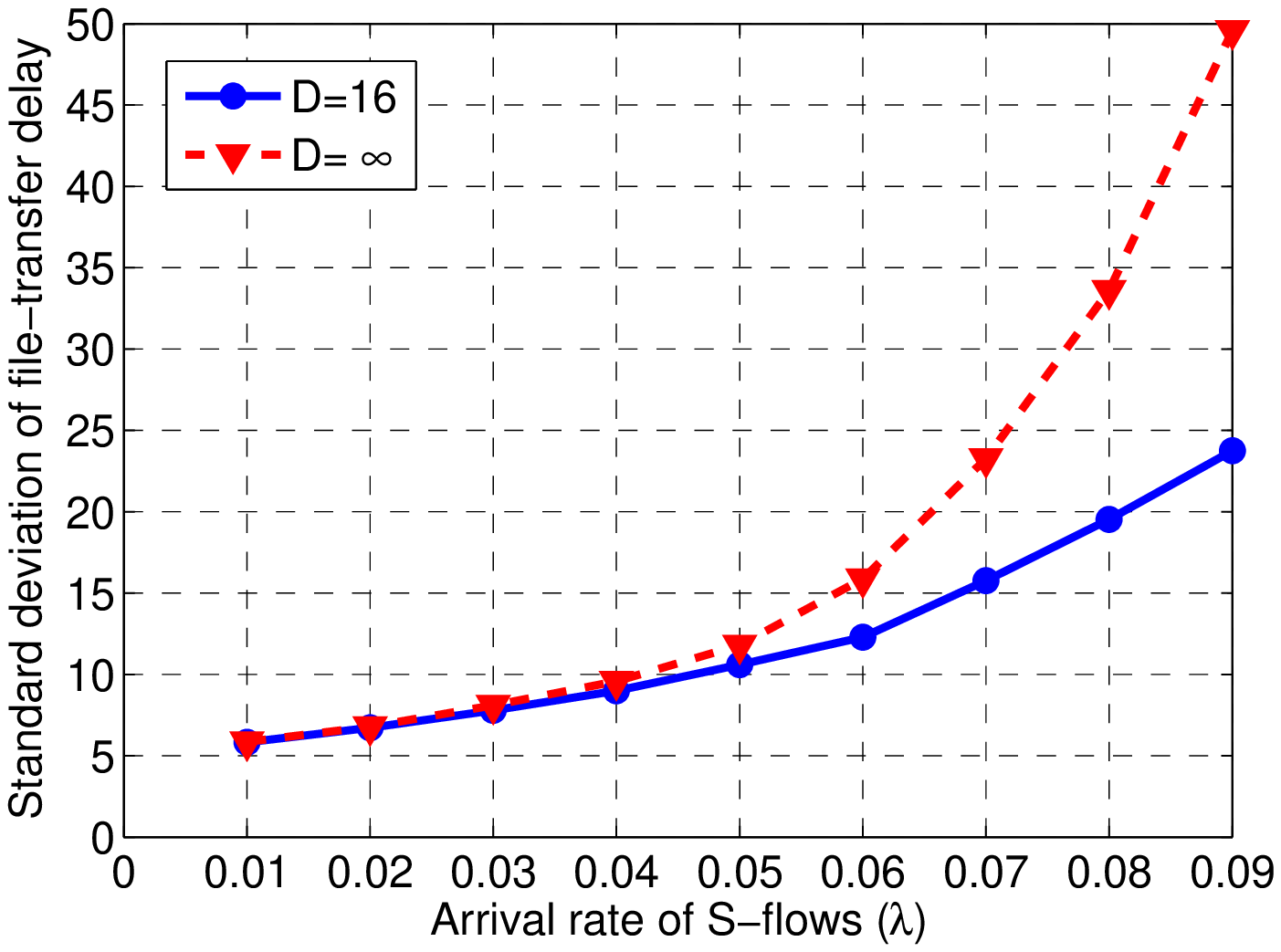}
\caption{The performance of WSLU with $D=16$ and $D=\infty$ when the traffic load is light or medium}
\label{fig: D}
\end{center}
\end{figure}

\begin{color}{black}
Further we would like to comment that while the WSLU algorithm with a small $D$ has a better performance in light or medium traffic regimes, throughput optimality is only guaranteed when $D$ is sufficiently large. Figure \ref{fig: D_heavy} illustrates the average number of S-flows and average file-transfer delay for $D=16$ and $D=\infty$ in heavy traffic regime. We can observe that in the heavy traffic regime, the WSLU with $D=\infty$ still stabilizes the network but the algorithm with $D=16$ does not. So there is a clear tradeoff in choosing $D$: A small $D$ reduces the file-transfer delay in light or medium traffic regimes, but a large $D$ guarantees stability in heavy traffic regime.
\end{color}

\begin{figure}[h]
\begin{center}
\includegraphics[width=2.6in]{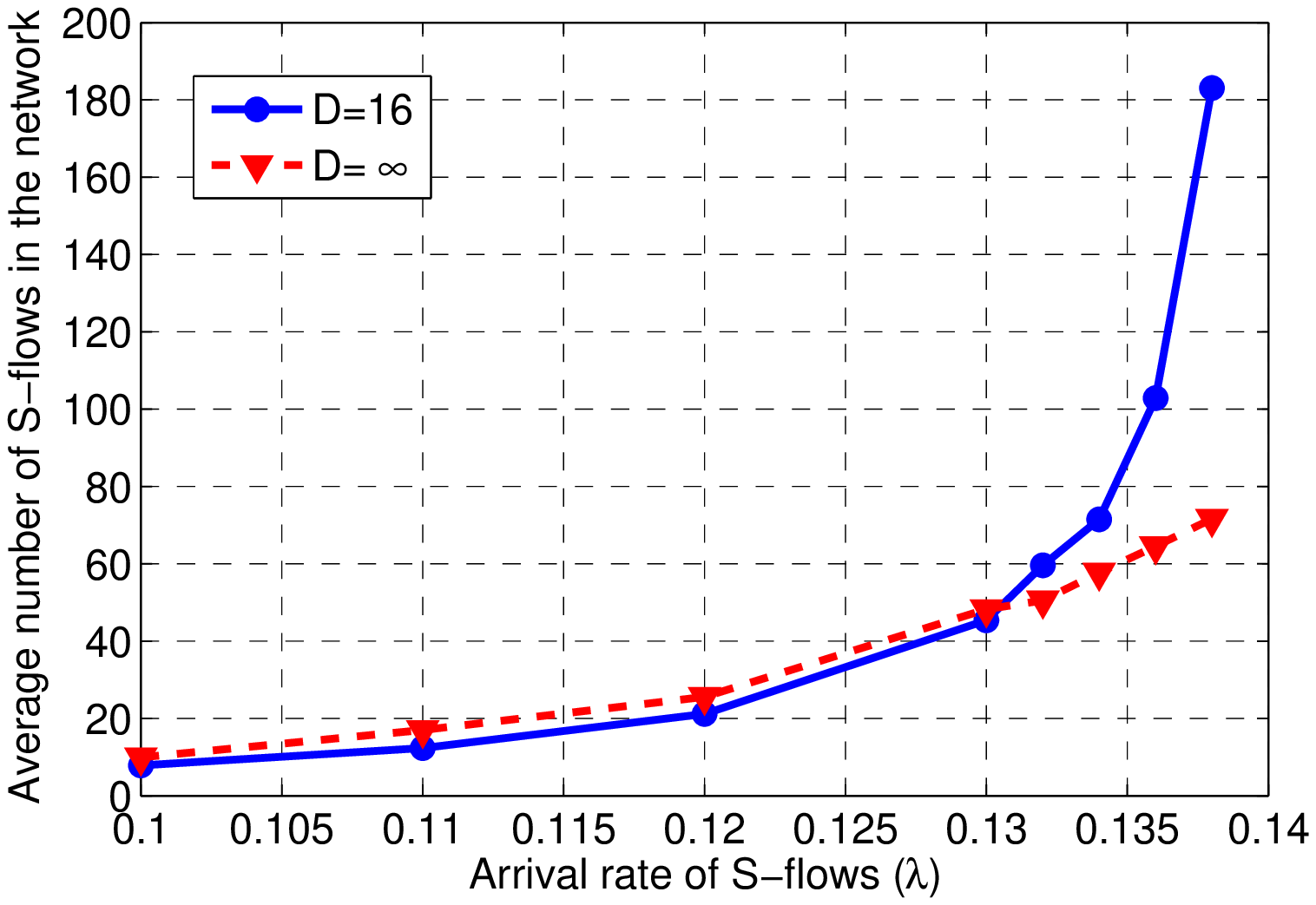}
\includegraphics[width=2.6in]{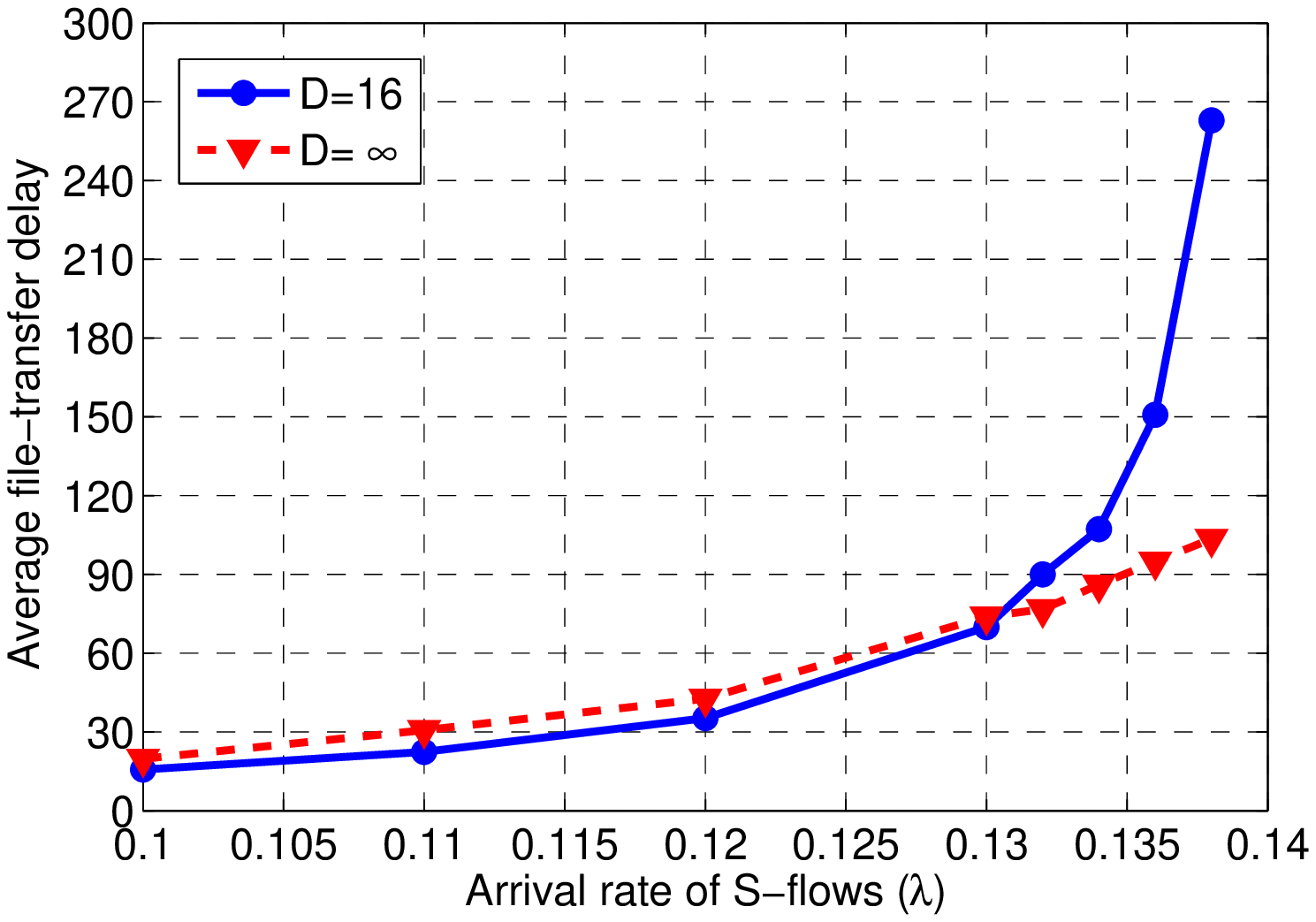}
\caption{The performance of WSLU with $D=16$ and $D=\infty$ when the traffic load is heavy}
\label{fig: D_heavy}
\end{center}
\end{figure}

\subsection*{Simulation III: Performance comparison of various algorithms}
\begin{color}{black}
In the following simulations, we choose $D=16.$  In the introduction, we have pointed out that the MaxWeight is not throughput optimal under flow-level dynamics because the backlog of a short-lived queue does not build up even when it has not been served for a while. To overcome this, one could try to use the delay of the head-of-line packet, instead of queue-length, as the weight because the head-of-line delay will keep increasing if no service is received. In the case of long-lived flows only, this algorithm is known to be throughput-optimal \cite{ErySriPer_05}. We will show that this Delay-based scheduling does not solve the instability problem when there are short-lived flows.

{\bf Delay-based Scheduling:} At each time slot, the base station selects a flow $i$ such that  $i\in \arg\max_i D_i(t)R_i(t),$ where $D_i(t)$ is the delay experienced so far by the head-of-line packet of flow $i.$
\end{color}

We first consider the case where all flows are S-flows, which arrive according to a truncated Poisson process with maximum value $100$ and mean $\lambda.$  An S-flow is assigned with a G-link or a P-link equally likely.

Figure \ref{fig: DSLWO} shows the average file-transfer delay  and average number of S-flows under different values of $\lambda.$  We can see that WSLU performs significantly better than the MaxWeight and Delay-based algorithms. Specifically, under MaxWeight and Delay-based algorithms, both the number of S-flows and file-transfer delay explode when $\lambda\geq 0.102.$ WSLU, on the other hand, performs well even when $\lambda=0.12.$


\begin{figure}[h]
\begin{center}
\includegraphics[width=2.6in]{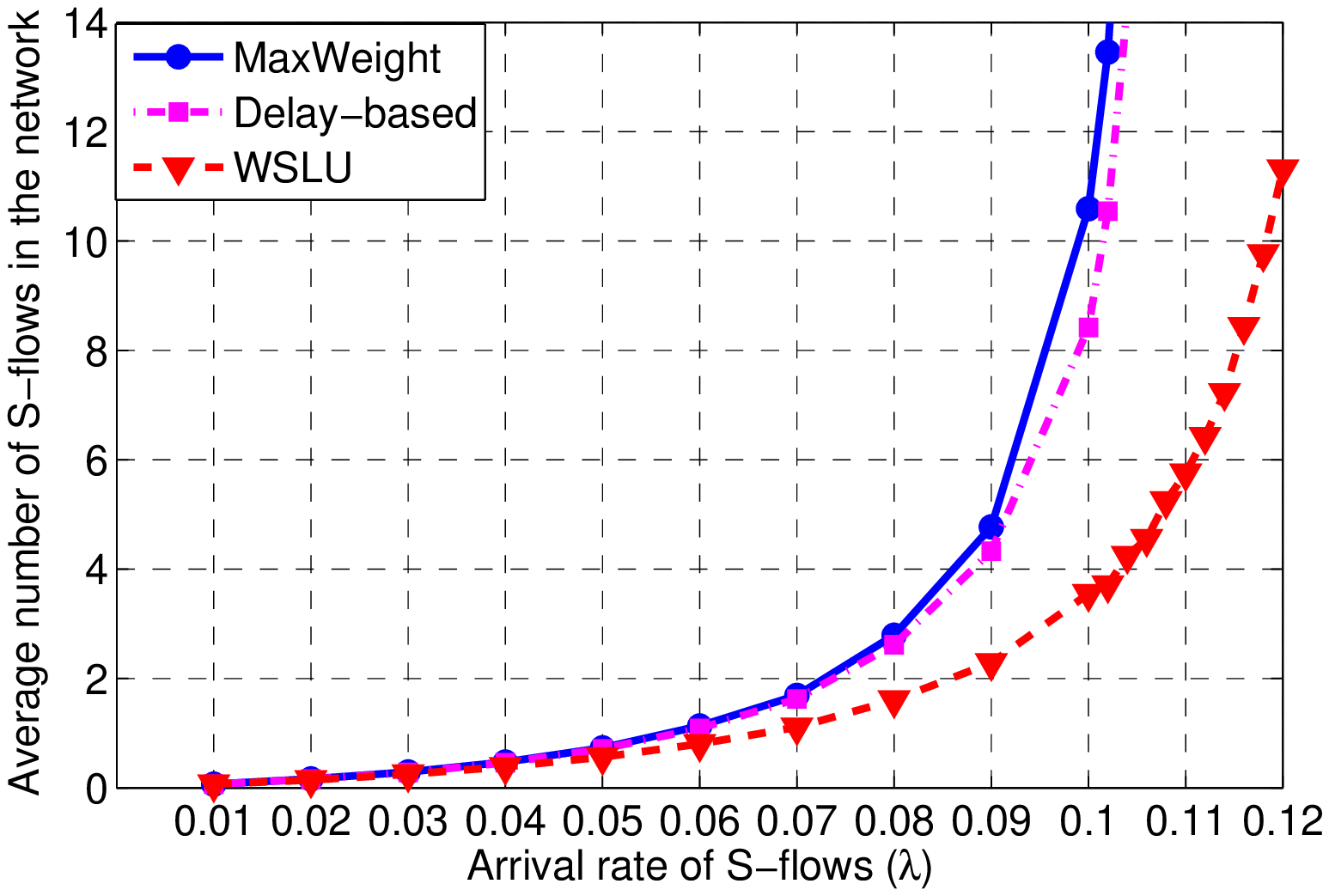}
\includegraphics[width=2.6in]{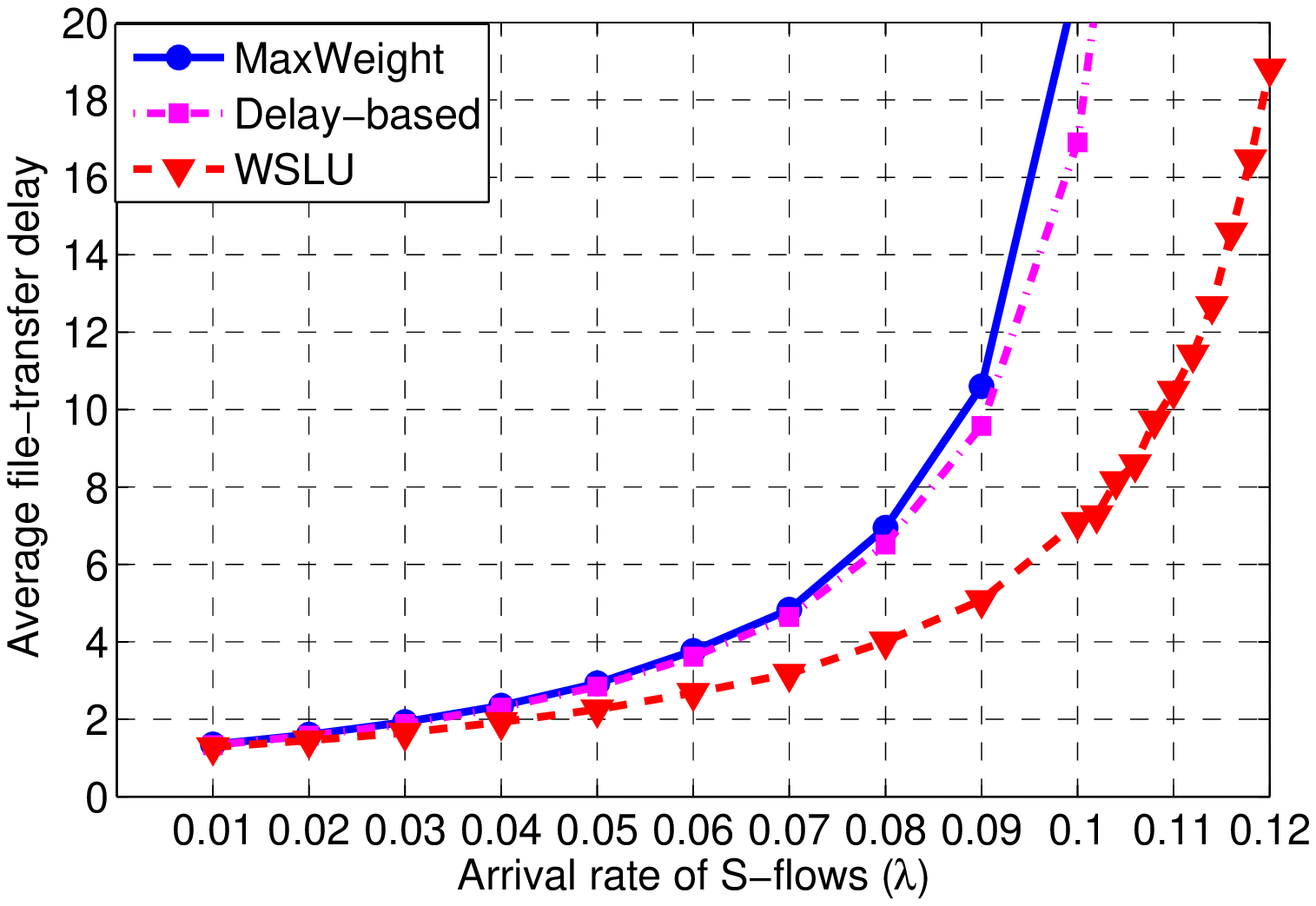}
\end{center}
\caption{The performance of the Delay-based, MaxWeight, and WSLU algorithms in a network without L-flows}
\label{fig: DSLWO}
\end{figure}

Next, we consider the same scenario with three L-flows in the network. Two of the L-flows have G-links and one has a P-link. Figure \ref{fig: SLW} shows the average number of short-lived flows and average file-transfer delay under different values of $\lambda.$ We can see that the MaxWeight becomes unstable even when \emph{the arrival rate of S-flows is very small.}  This is because the MaxWeight stops serving S-flows when the backlogs of L-flows are large, so S-flows stay in the network forever. The delay-based scheduling performs better than the MaxWeight, but significantly worse than WSLU.
\begin{figure}[h]
\begin{center}
\includegraphics[width=2.6in]{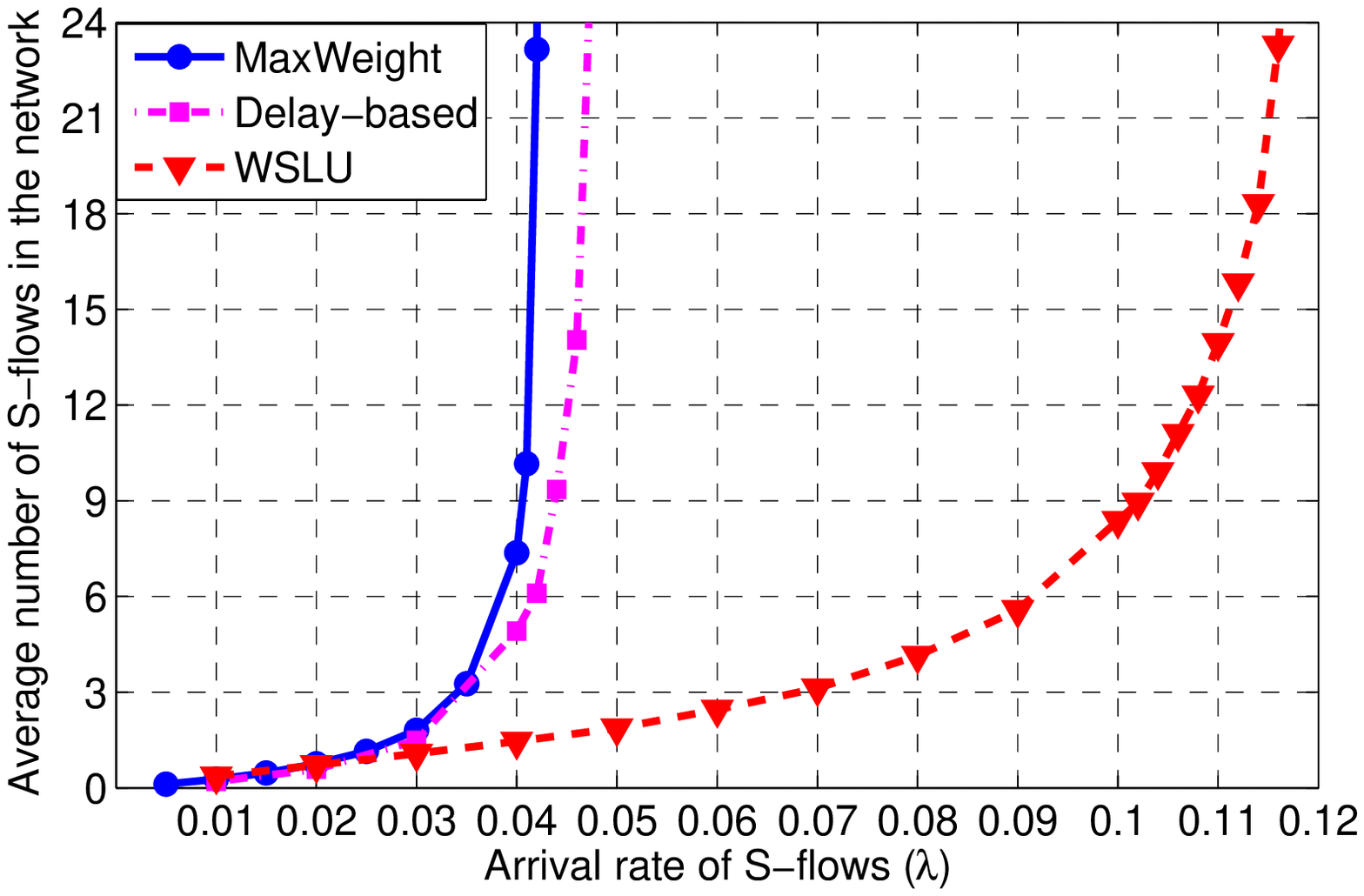}
\includegraphics[width=2.6in]{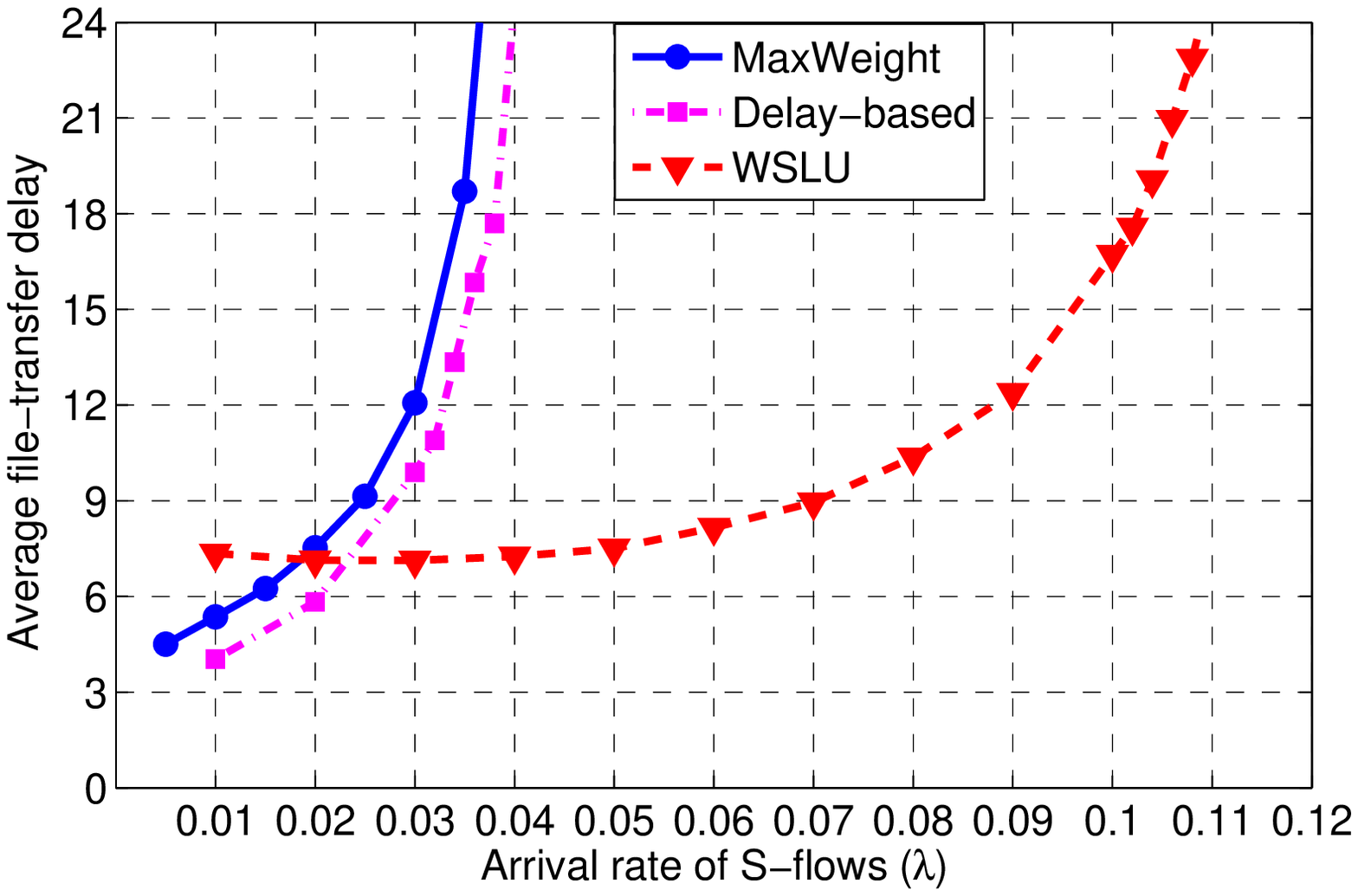}
\end{center}
\caption{The performance of the Delay-based, MaxWeight, and WSLU algorithms in a network with both S-flows and L-flows}
\label{fig: SLW}
\end{figure}

\subsection*{Simulation IV: Blocking probability of various algorithms}
While our theory assumes that the number of flows in the network can be infinite, in reality, base stations limit the number of simultaneously active flows, and reject new flows when the number of existing flows above some threshold. In this simulation, we assume that the base station can support at most $20$ S-flows. A new S-flow will be blocked if $20$ S-flows are already in the network. In this setting, the number of flows in the network is finite, so we compute the blocking probability, i.e., the fraction of S-flows rejected by the base station.

We consider the case where no long-lived flow is in the network and the case where both short-lived and long-lived flows are present in the network. The flows and channels are selected as in Simulation III.  The results are shown in Figure \ref{fig: Blocking_1} and \ref{fig: Blocking_2}. We can see that the blocking probability under WSLU is substantially smaller than that under the MaxWeight or the delay-based scheduling. Thus, this simulation demonstrates that instability under the assumption when the number of flows is allowed to unbounded implies high blocking probabilities for the practical scenario when the base station limits the number of flows in the network.

\begin{figure}[h]
\begin{center}
\includegraphics[width=2.6in]{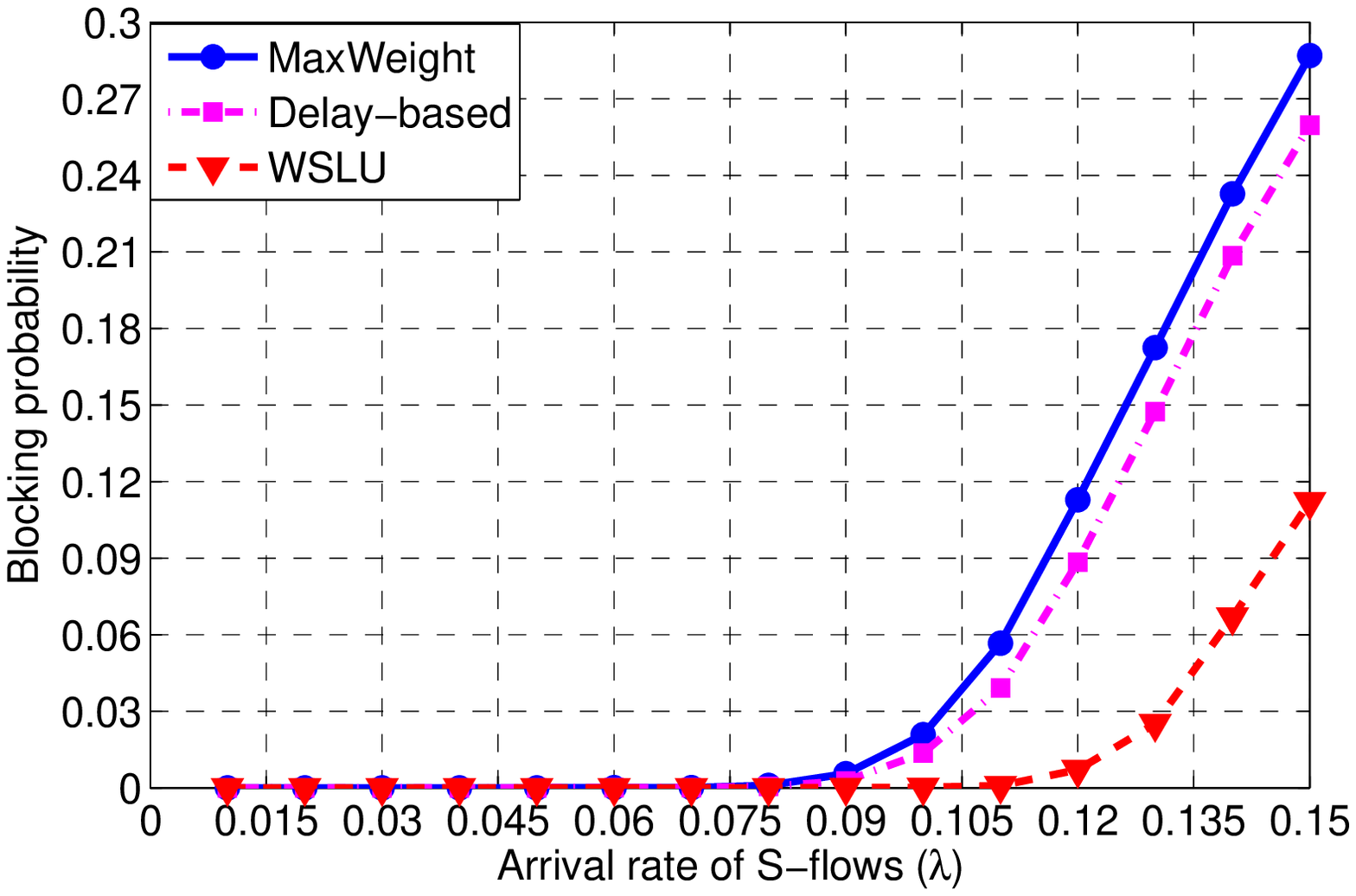}
\caption{The blocking probabilities of the Delay-based, MaxWeight, and WSLU  in a network without L-flows}
\label{fig: Blocking_1}
\includegraphics[width=2.6in]{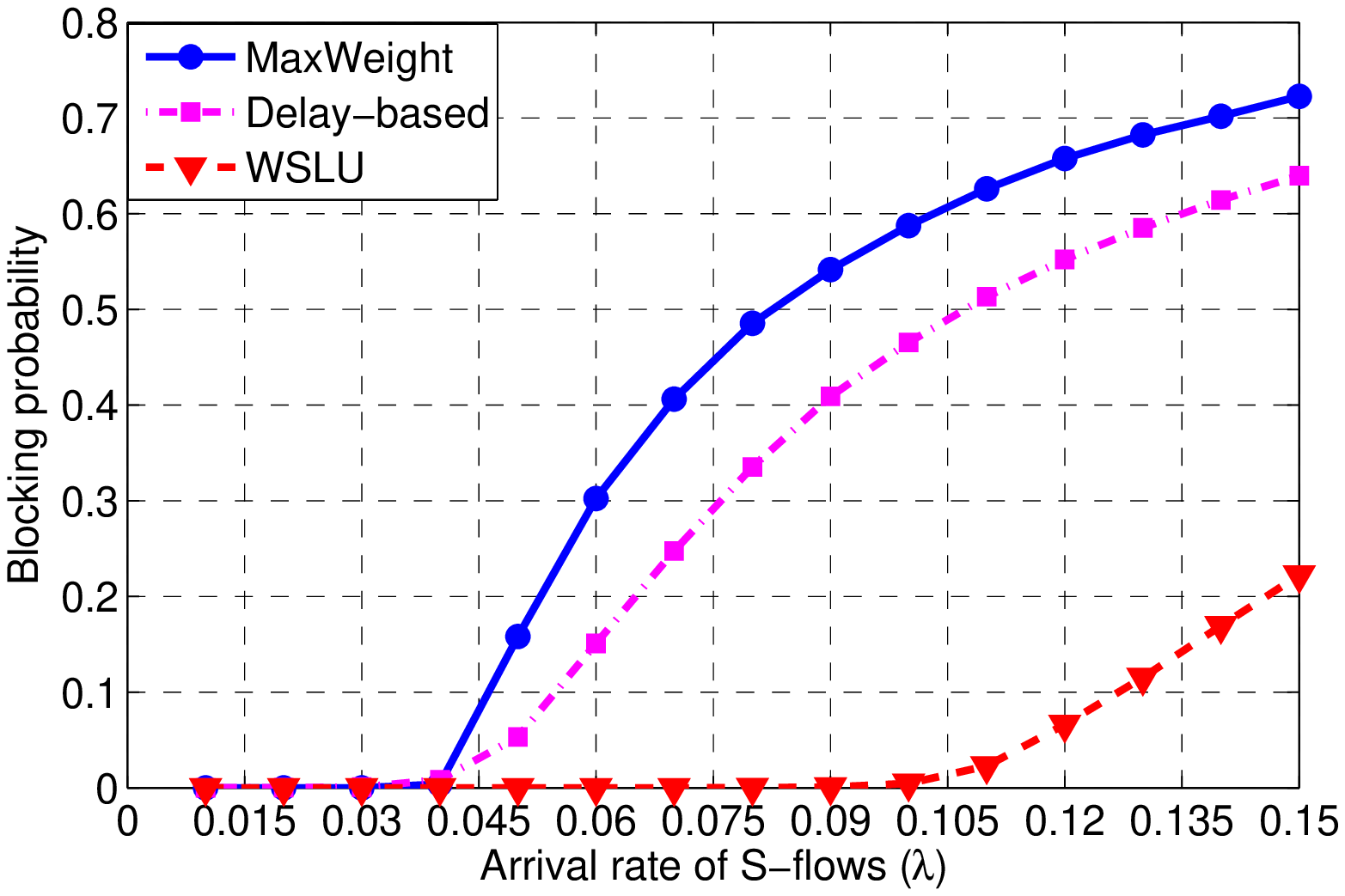}
\caption{The blocking probabilities of the Delay-based, MaxWeight, and WSLU  in a network with L-flows}
\label{fig: Blocking_2}
\end{center}
\end{figure}

\subsection*{Simulation V:  WSLU versus WSLO}

In this simulation, we study the impact of tie-breaking rules on performance. We compare the performance of the WSLU and WSLO. We first study the case where the base station does not limit the number of simultaneously active flows and there is no long-lived flow in the network. The simulation setting is the same as that in Simulation III. Figure \ref{fig: WSLUO_without} shows the average file-transfer delay and average number of S-flows under different values of $\lambda.$ We can see that the WSLO reduces the file-transfer delay and number of S-flows by nearly $75\%$ when $\lambda=0.13,$ which indicates the importance of
selecting a good tie-breaking rule for improving the network
performance.

\begin{figure}[h]
\begin{center}
\includegraphics[width=2.7in]{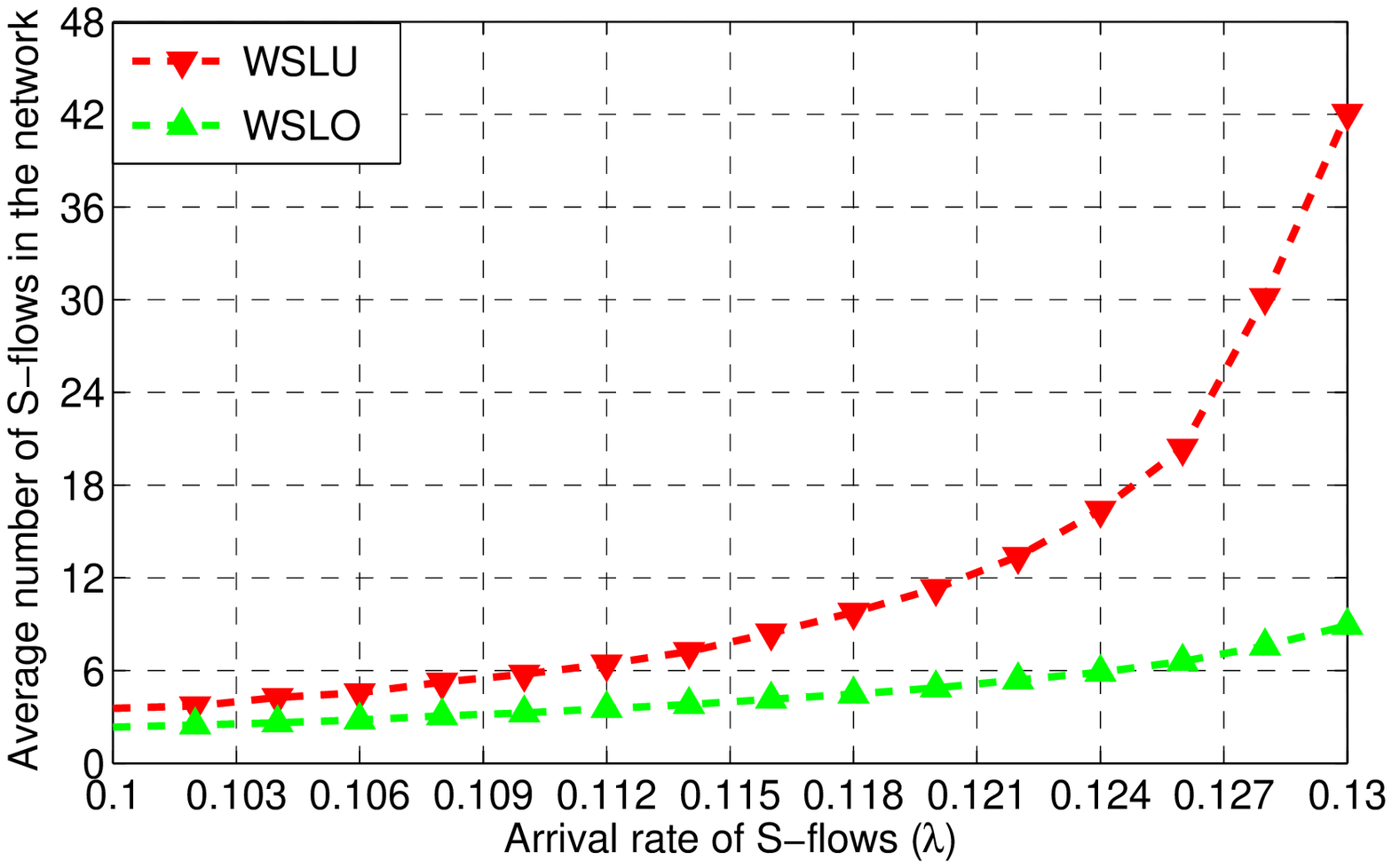}
\includegraphics[width=2.7in]{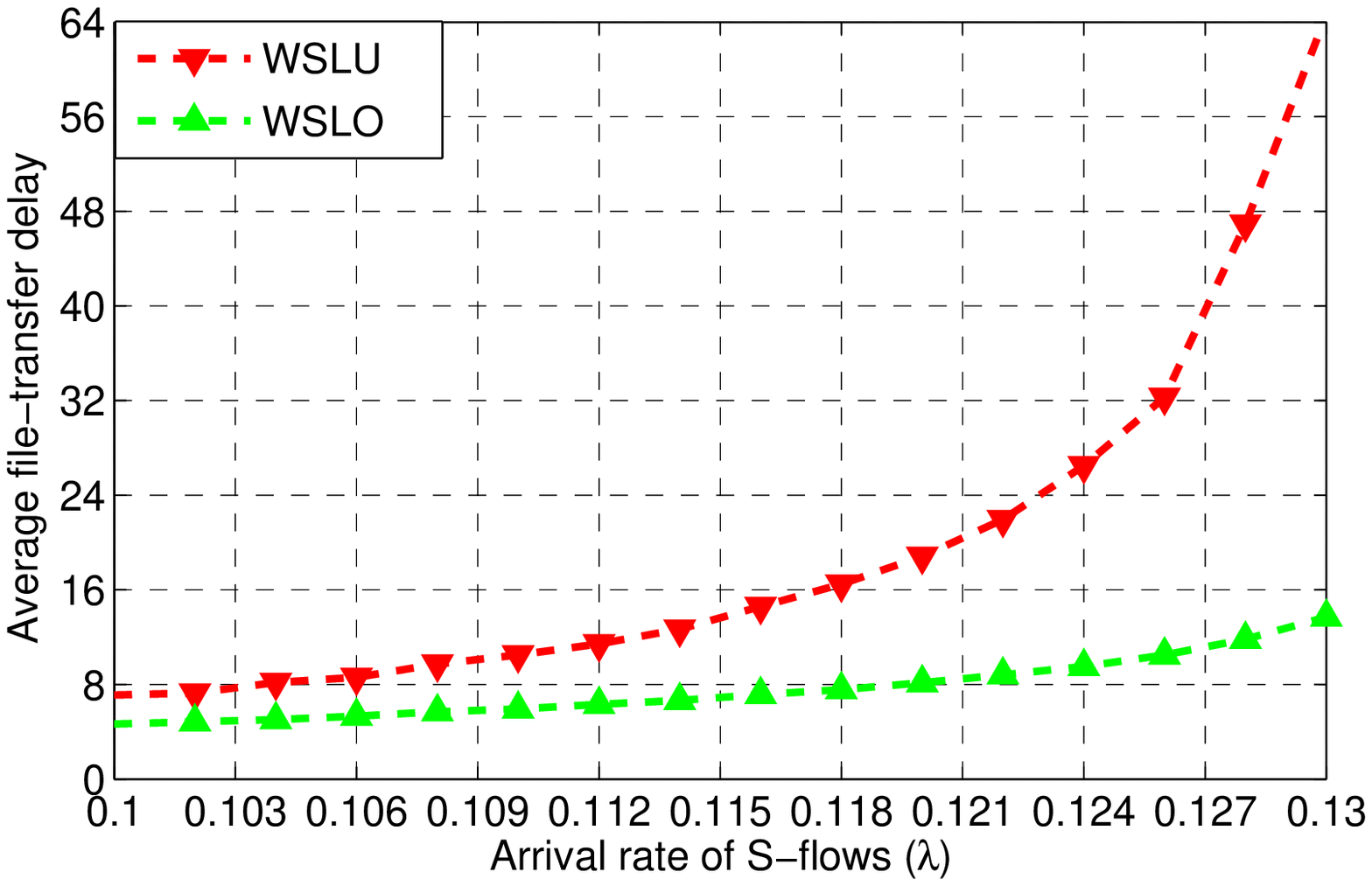}
\caption{The performance of the WSLU and WSLO algorithms in a network without L-flows}
\label{fig: WSLUO_without}
\end{center}
\end{figure}

Next, we study the case where the base station does not limit the number of simultaneously active flows and there are three L-flows in the network. Figure \ref{fig: WSLUO_with} shows the average number of short-lived flows and average file-transfer delay under different values of $\lambda.$ We can see again that the WSLO algorithm has a much better performance than the WSLU, especially when $\lambda$ is large.

\begin{figure}[h]
\begin{center}
\includegraphics[width=2.7in]{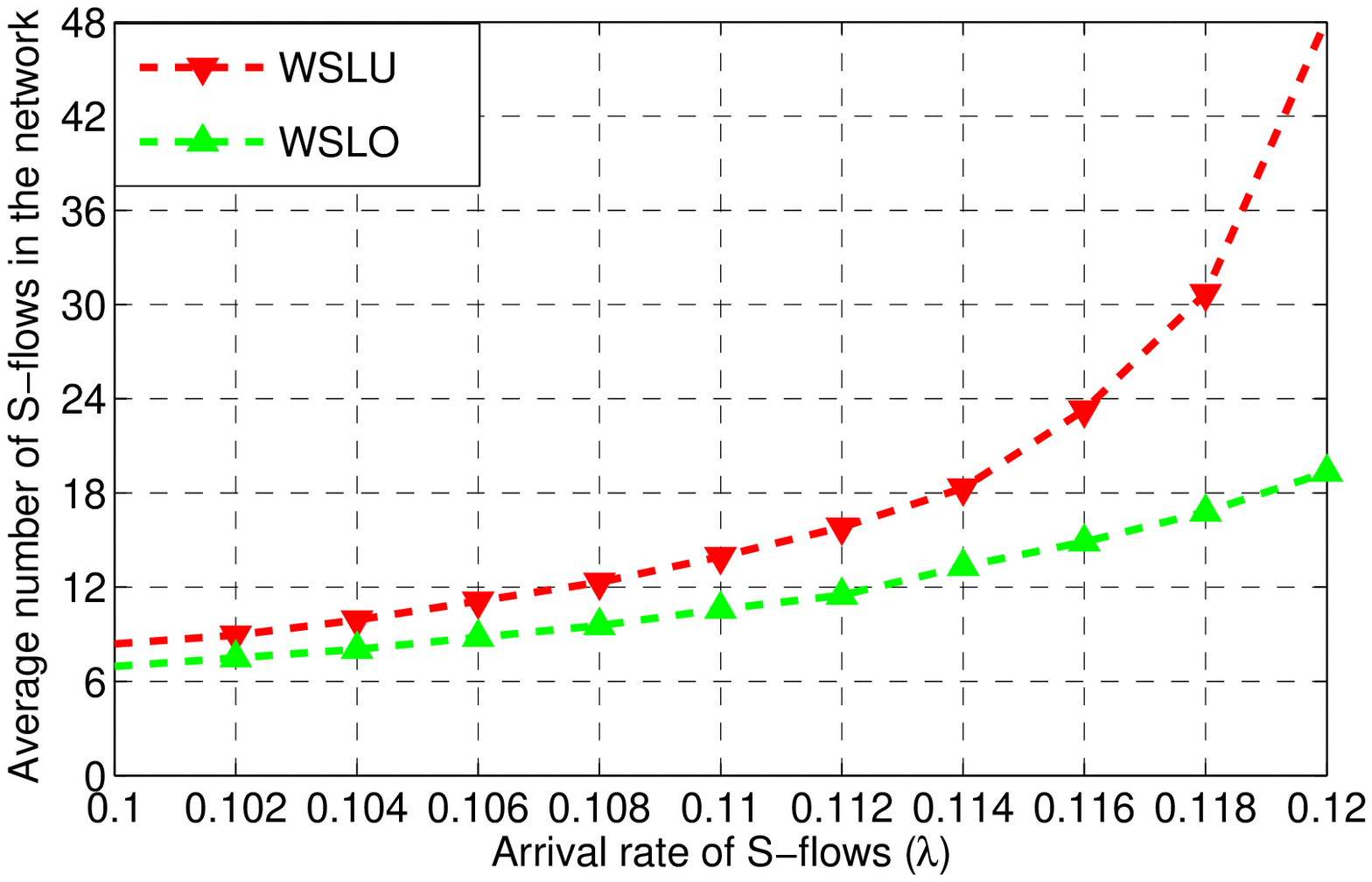}
\includegraphics[width=2.7in]{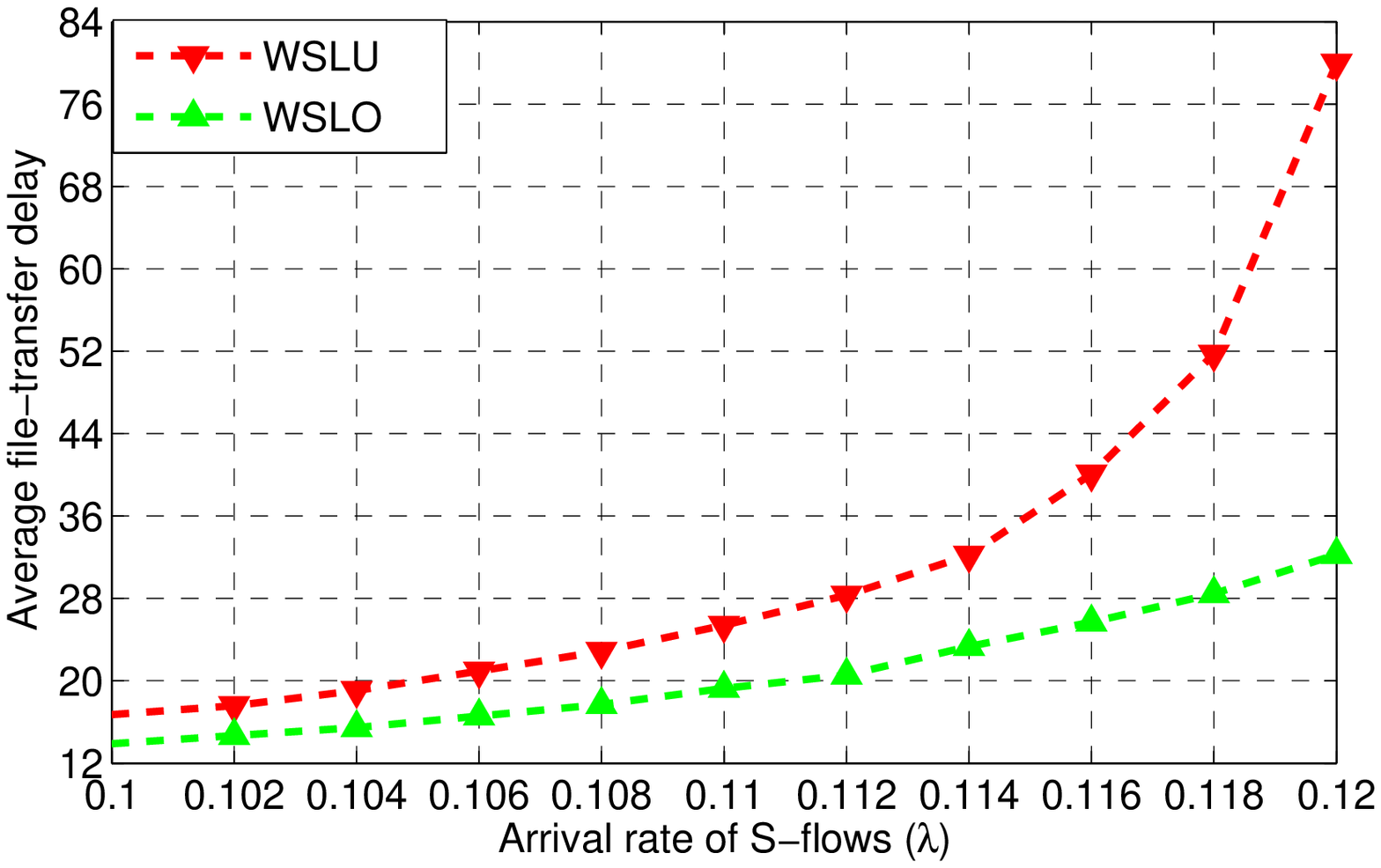}
\caption{The performance of the WSLU and WSLO algorithms in a network with both S-flows and L-flows}
\label{fig: WSLUO_with}
\end{center}
\end{figure}

Finally we consider the situation where the base station can support at most $20$ S-flows. A new S-flow will be blocked if $20$ S-flows are already in the network. The simulation setting is the same as that in Simulation IV. We calculate the blocking probabilities, and the results are shown in Figure \ref{fig: WSLUO_blocking_1} and \ref{fig: WSLUO_blocking_2}. We can see that the blocking probability under the WSLO is much smaller than that under the WSLU policy when $\lambda$ is large.
\begin{figure}[h]
\begin{center}
\includegraphics[width=2.7in]{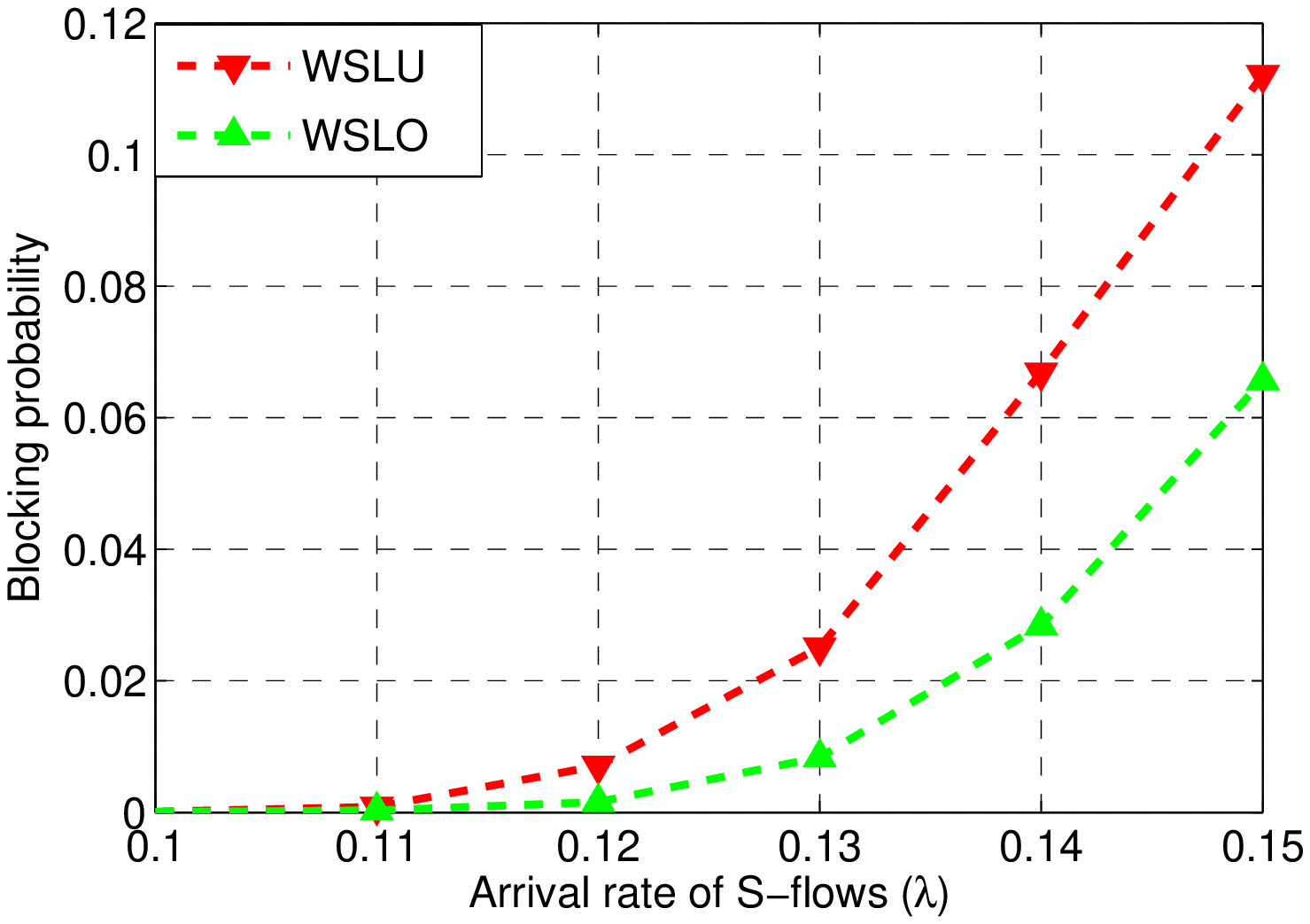}
\caption{The blocking probabilities of the WSLU and WSLO in a network without L-flows}
\label{fig: WSLUO_blocking_1}
\includegraphics[width=2.7in]{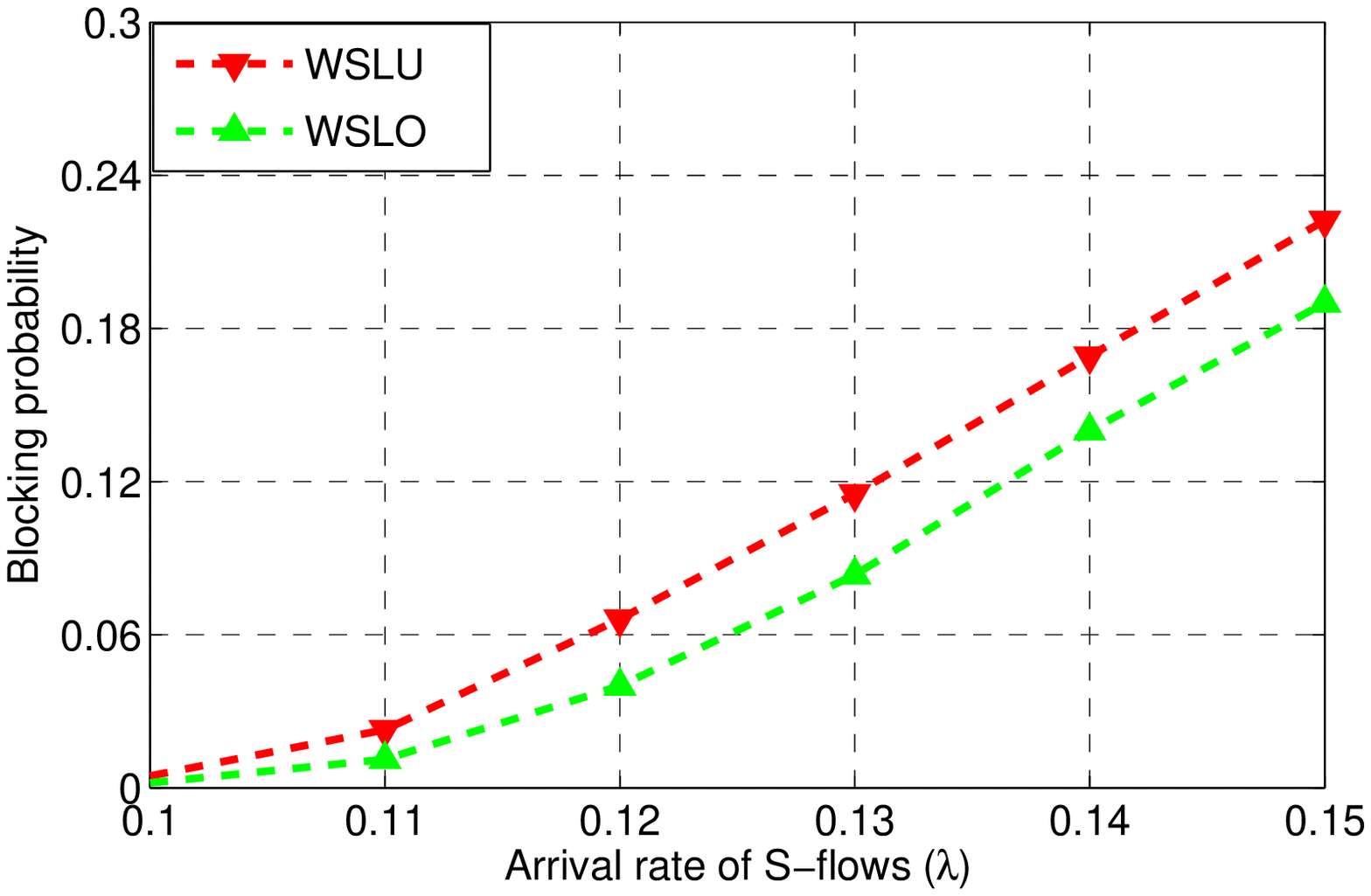}
\caption{The blocking probabilities of the WSLU and WSLO in a network with L-flows}
\label{fig: WSLUO_blocking_2}
\end{center}
\end{figure}

\section{Conclusions and Discussions}
In this paper, we studied multiuser scheduling in networks with flow-level dynamics. We first obtained necessary conditions for flow-level stability of networks with both long-lived flows and short-lived flows. Then based on an optimization framework, we proposed the workload-based scheduling with learning that is throughput-optimal under flow-level dynamics and requires no prior knowledge about channels and traffic. In the simulations, we evaluated the performance of the proposed scheduling algorithms, and demonstrated that the proposed algorithm performs significantly better than the MaxWeight algorithm and the Delay-based algorithm in various settings. Next we discuss the limitations of our model and possible extensions.

\subsection{The choice of $D$}
According to Theorem \ref{thm: learning},  the learning period $D$ should be sufficiently large to guarantee throughput-optimality. Our simulation results on the other hand suggested that a small $D$ may result in better performance. Therefore, there is clear trade-off in choosing $D.$ The study of the choice for $D$ is one potential future work.

\subsection{Unbounded file arrivals and file sizes }
One limitation of our model is that the random variables associated with the number of file arrivals and file sizes are assumed to be upper bounded. One interesting future research problem is to extend the results to unbounded number of file arrivals and file sizes.


\section*{Appendix A: Proof of Theorem \ref{thm: ori}}
Recall that $W_s(t)=\sum_{i\in {\cal I}(t)} \left\lceil\frac{Q_i(t)}{{R}_i^{\max}}\right\rceil.$ We define $\hat{R}_k^{\max}$ to be the largest achievable link rate of class-$k$ short-lived flows, and
$A_s(t)=\sum_{k\in{\cal K}}\sum_{i=1}^{{\Lambda}_k(t)} \left\lceil\frac{f_i}{\hat{R}^{\max}_k}\right\rceil,$
which is the amount of new workload (from short-lived flows) injected in the network at time $t,$ and $\mu_s(t)$ to be the decrease of the workload at time $t,$ i.e., $\mu_s(t)=1$ if the workload of short-lived flows is reduced by one and $\mu_s(t)=0$ otherwise. Based on the notations above, the evolution of short-lived flows can be described as:  $$W_s(t+1)=W_s(t)+A_s(t)-\mu_s(t).$$ Further, the evolution of  $Q_l(t)$ can be described as $$Q_l(t+1)=Q_l(t)+X_l(t)-\mu_l(t)+u_l(t),$$ where $\mu_l(t)$ is the decrease of $Q_l(t)$ due to the service long-lived flow $l$ receives at time $t,$ and $u_l(t)$ is the unused service due to the lack of data in the queue.

We consider the following Lyapunov function
\begin{equation}
V(t)=\alpha\left(W_s(t)\right)^2+\sum_{l\in{\cal L}}(Q_l(t))^2.
\end{equation}
We will prove that the drift of the Lyapunov function satisfies
\begin{eqnarray*}
\Ex[V(t+1)-V(t)|{\bf M}(t)] \leq U_d1_{{\bf M}(t)\in\Upsilon} -\frac{\epsilon}{2}\left[\alpha \bar{\lambda}W_s(t) \right.\\
+\left. \sum_{l\in \cal{L}} Q_l(t)x_l \right] 1_{{\bf M}(t)\not\in\Upsilon}
\end{eqnarray*} for some $U_d>0,$ $\bar{\lambda}>0$ and a finite set $\Upsilon$  (the values of these parameters will be defined in the following analysis). Positive recurrence of $\mathbf{M}$ then follows from Foster's Criterion for Markov chains \cite{asm03}.

First, since the number of arrivals, the sizes of short-lived flows and channel rates are all bounded,
it can be verified that there exists $U,$ independent of $\mathbf{M}(t),$ such that
\begin{align*}
&\Ex[V(t+1)-V(t)|{\bf M}(t)]\\
=&\Ex\left[\alpha\left(W_s(t+1)\right)^2-\alpha\left(W_s(t)\right)^2+\right.\\
&\hspace{0.2in}\left.\left.\sum_{l\in{\cal L}}(Q_l(t+1))^2-\sum_{l\in{\cal L}}(Q_l(t))^2\right|\mathbf{M}(t)\right]\\
\leq&U+2\alpha W_s(t) \Ex\left[\left.A_s(t)-\mu_s(t)\right|\mathbf{M}(t)\right]+\\
&\hspace{0.4in}2\sum_{l\in{\cal L}}Q_l(t)\Ex\left[\left.X_l(t)-\mu_l(t)\right|\mathbf{M}(t)\right]\\
\leq&U+2\alpha W_s(t)\left(\left(\sum_{k\in{\cal K}}\lambda_k \Ex\left[\left\lceil\frac{\hat{F}_k}{\hat{R}^{\max}_k}\right\rceil\right]\right)\right.\\
&\hspace{1.6in}-\Ex\left[\left.\mu_s(t)\right|\mathbf{M}(t)\right]\Big)\\
&\hspace{0.2in}+2\sum_{l\in{\cal L}}Q_l(t)\left(x_l-\Ex\left[\left.\mu_l(t)\right|\mathbf{M}(t)\right]\right).
\end{align*}
Recall that we assume that $(1+\epsilon)x_l$ and $(1+\epsilon)\lambda_k$ satisfy the supportability conditions of Theorem \ref{thm: NC}. By adding and subtracting corresponding $p_{\mathbf{c},l}R_{\mathbf{c},l}$ and $\mu_{\mathbf{c},s},$ we obtain that
\begin{eqnarray*}
&&\Ex[V(t+1)-V(t)|{\bf M}(t)]-U\\
&\leq&2\alpha W_s(t) \Ex\left[\left.\Ex\left[\left.\mu_{\mathbf{c},s}-\mu_{s}(t)\right|\mathbf{C}(t)=\mathbf{c}\right]\right|\mathbf{M}(t)\right]\\
&&+2\sum_{l\in{\cal L}}Q_l(t)\Ex\left[\left.\Ex\left[\left.p_{\mathbf{c},l}R_{\mathbf{c},l}-\mu_l(t)\right|\mathbf{C}(t)=\mathbf{c}\right]\right|\mathbf{M}(t)\right]\\
&&-2\epsilon\alpha W_s(t) \bar{\lambda}-2\epsilon\sum_{l\in{\cal L}}Q_l(t)x_l,
\end{eqnarray*} where
$$\bar{\lambda}=\left(\sum_{k\in{\cal K}}\lambda_k \Ex\left[\left\lceil\frac{\hat{F}_k}{\hat{R}^{\max}_k}\right\rceil\right]\right).$$

Next we assume $\mathbf{C}(t)=\mathbf{c}$ and analyze the following quantity
\begin{align}
\displaystyle \alpha W_s(t)\left(\mu_{\mathbf{c},s}-\mu_s(t)\right)+\sum_{l\in{\cal L}}Q_l(t)\left(p_{\mathbf{c},l}R_{\mathbf{c},l}-\mu_l(t)\right).\label{eq: dif}
\end{align} We have the following facts:
\begin{itemize}
\item {\bf Fact 1:} {\em Assume that there exists a short-lived flow $i$ such that $R_i(t)=R^{\max}_i$ or $R_i(t)\geq Q_i(t).$} If a short-lived flow is selected to be served, then the workload of the selected flow is reduced by one and $\mu_s(t)=1.$  If long-lived flow $l$ is selected, the rate flow $l$ receives is $R_{\mathbf{c},l}.$ Thus, we have that
\begin{eqnarray*}
&\displaystyle \alpha  W_s(t) \mu_s(t)+\sum_{l\in{\cal L}}Q_l(t)\mu_l(t) \\
=&\max\left\{\alpha  W_s(t), \max_l Q_l(t)R_{\mathbf{c},l}\right\}\\
\geq&\displaystyle  \alpha  W_s(t) \mu_{\mathbf{c},s}+\sum_{l\in{\cal L}}Q_l(t)p_{\mathbf{c},l}R_{\mathbf{c},l},
\end{eqnarray*} where the last inequality holds because $\sum_l p_{\mathbf{c},l}+\mu_{\mathbf{c},s}\leq 1.$
Therefore, we have $(\ref{eq: dif})\leq 0$ in this case.

\item {\bf Fact 2:} {\em Assume that there does not exist a short-lived flow $i$ such that $R_i(t)=R^{\max}_i$ or $R_i(t)\geq Q_i(t).$} In this case, we have
\begin{eqnarray*}
(\ref{eq: dif})&\leq& \alpha W_s(t)+\max_{l\in{\cal L}}Q_l(t)R_{\mathbf{c},l}\\
&\leq& \alpha W_s(t)+R^{\max}\max_{l\in{\cal L}}Q_l(t).
\end{eqnarray*}
\rightline{$\square$}
\end{itemize}

Now we define a set $\Upsilon$ such that $$\Upsilon=\left\{\mathbf{M}:  W_s\leq {U_W} \hbox{ and } Q_l\leq U_Q \hbox{ }\forall l\right\},$$ where $U_W$ is a positive integer satisfying that
\begin{eqnarray}
&(1-p^{\max}_s)^{\frac{U_W}{{F}^{\max}}}\leq \frac{{\epsilon}}{2}\min\left\{\bar{\lambda}, \frac{\min_{l\in{\cal L}} x_l}{R^{\max}}\right\}\triangleq\epsilon_1\label{eq: UW1}\\
&U_W\geq \frac{2U}{\epsilon \alpha \bar{\lambda}},\label{eq: UW2}\end{eqnarray} and $U_Q$ is a positive integer satisfying \begin{eqnarray}
U_Q\geq \frac{4\alpha U_W+U}{\epsilon\min_{l\in{\cal L}}x_l}.\label{eq: UQ0}
\end{eqnarray}

We next compute the drift of the Lyapunov function according to the value of $\mathbf{M}(t).$
\begin{itemize}
\item {\bf Case I:} Assume $\mathbf{M}(t)\in \Upsilon.$ According to the definition of $\Upsilon,$ we have \begin{eqnarray*}
\Ex[V(t+1)-V(t)|{\bf M}(t)]\leq U+2\alpha U_W+ 2{R}^{\max} L U_Q.
\end{eqnarray*}

\item  {\bf Case II:} Assume  $W_s(t)> U_W.$ Since the size of a short-lived flow is upper bounded by $F^{\max},$ $ W_s(t)> U_W$ implies that at least $\frac{U_W}{ {F}^{\max}}$ short-lived flows are in the network at time $t.$ Define ${\cal S}(t)$ to be the following event: \begin{color}{black}no short-lived flow satisfies $R_i(t)=R_i^{\max}$ or $R_i(t)\geq Q_i(t)$ . \end{color} 

    Recall that  $$\min_i \Pr(R_i(t)=R^{\max}_i)\geq p_s^{\max}.$$ Given at least $\frac{U_W}{{F}^{\max}}$ short-lived flows are in the network, we have that $$\Pr(1_{{\cal S}(t)}=1)\leq (1-p^{\max}_s)^{\frac{U_W}{{F}^{\max}}}\leq \epsilon_1.$$

    According to facts 1 and 2, $(\ref{eq: dif})$ is positive only if ${\cal S}(t)$ occurs and the value of $(\ref{eq: dif})$ is bounded by $\alpha  W_s(t)+R^{\max}\max_{l\in{\cal L}}Q_l(t).$ Therefore, we can conclude that in this case (Case II),
    \begin{eqnarray}
&&\Ex[V(t+1)-V(t)|{\bf M}(t)]\nonumber\\
&\leq& U+2 \epsilon_1 \left(\alpha  W_s(t)+R^{\max}\max_{l\in{\cal L}}Q_l(t)\right)\nonumber\\
&&-2\epsilon \alpha  W_s(t) \bar{\lambda}-2\epsilon\sum_{l\in{\cal L}}Q_l(t)x_l \nonumber\\
&\leq& U-\epsilon \alpha W_s(t) \bar{\lambda}-\epsilon\sum_{l\in{\cal L}}Q_l(t)x_l \label{eq: II1}\\
&\leq& -\frac{\epsilon}{2}\left[\alpha \bar{\lambda}W_s(t)+ \sum_{l\in \cal{L}} Q_l(t)x_l \right]\label{eq: II2}
\end{eqnarray} where inequality (\ref{eq: II1}) holds due to the definition of $\epsilon_1$ (\ref{eq: UW1}), and inequality (\ref{eq: II2}) holds due to inequality (\ref{eq: UW2}).

\item {\bf Case III:} Assume that $ W_s(t)\leq U_W$ and $Q_l(t)> U_Q$ for some $l.$ In this case,
if a long-lived flow is selected for a given $\mathbf{c},$ we have
\begin{eqnarray*}
(\ref{eq: dif})&\leq&\alpha W_s(t)\mu_{\mathbf{c},s} \leq  \alpha W_s(t).
\end{eqnarray*}  Otherwise, if a short-lived flow is selected, it means for the given $\mathbf{c},$ we have
$\max_l Q_l(t)R_{\mathbf{c},l}\leq \alpha  W_s(t),$
and
\begin{eqnarray*}
(\ref{eq: dif})\leq 2\alpha W_s(t).
\end{eqnarray*}
Therefore, we can conclude that in this case,
\begin{align}
&\Ex[V(t+1)-V(t)|{\bf M}(t)]\nonumber\\
\leq& U+4\alpha W_s(t)-2\epsilon\alpha W_s(t) \bar{\lambda}-2\epsilon\sum_{l\in{\cal L}}Q_l(t)x_l\label{eq: use1}\\
\leq& U+4\alpha U_W-2\epsilon\alpha W_s(t) \bar{\lambda}-2\epsilon\sum_{l\in{\cal L}}Q_l(t)x_l\nonumber\\
\leq&-\frac{\epsilon}{2}\left[\alpha \bar{\lambda}W_s(t)+ \sum_{l\in \cal{L}} Q_l(t)x_l \right]
\end{align} where the last inequality yields from the definition of $U_Q$ (\ref{eq: UQ0}).
\end{itemize}

From the analysis above, we can conclude that
\begin{eqnarray*}
\Ex[V(t+1)-V(t)|{\bf M}(t)] \leq U_d1_{{\bf M}(t)\in\Upsilon} -\frac{\epsilon}{2}\left[\alpha \bar{\lambda}W_s(t) \right.\\
+\left. \sum_{l\in \cal{L}} Q_l(t)x_l \right] 1_{{\bf M}(t)\not\in\Upsilon},
\end{eqnarray*} where $U_d= U+2\alpha U_W+ 2{R}^{\max} L U_Q$ and $\Upsilon$ is a set with a finite number of elements. \begin{color}{black} Since $V(t)\geq 0$ for all $t,$ the Lyapunov function is always lower bounded. Further the drift of the Lyapunov is upper bounded when ${\bf M}(t)$ belongs to a finite set $\Upsilon,$ and is negative otherwise. \end{color} So invoking Foster's criterion, the Markov chain $\mathbf{M}(t)$ is positive recurrent and the boundedness of the first moment follows from \cite{MeyTwe_09}.

\section*{Appendix B: Proof of Theorem \ref{thm: learning}}
Consider the network that is operated under WSL, and define ${\cal H}(t)$ to be $${\cal H}(t)\triangleq\left\{Q_l(t), R_l(t), Q_i(t), R_i(t),\tilde{R}^{\max}_i(t)\right\}.$$ Now {\em given ${\cal H}(t),$} we define the following notations:
\begin{itemize}
\item Define $\mu_{2;l}(t)=R_l(t)$ if flow $l$ is selected by WSL, and $\mu_{2;l}(t)=0$ otherwise.

\item Define $\mu_{2;i}(t)=1$ if flow $i$ is selected by WSL and the workload of flow $i$ can be reduced by one, and $\mu_{2;i}(t)=0$ otherwise.

\item Define $\mu_{1;l}(t)=R_l(t)$ if flow $l$ is selected by WS, and $\mu_{1;l}(t)=0$ otherwise.

\item Define $\mu_{1;i}(t)=1$ if flow $i$ is selected by WS and the workload of flow $i$ can be reduced by one, and $\mu_{1;i}(t)=0$ otherwise.
\end{itemize}
We remark that $\mu_{2;j}(t)$ is the action selected by the base station at time $t$ under WSL and $\mu_{1;j}(t)$ is the action selected by the base station at time $t$ under WS, assuming the same history ${\cal H}(t).$

We define the Lyapunov function to be \begin{equation}
V(n)=\alpha \left(W_s(nT)\right)^2+\sum_{l\in{\cal L}}(Q_l(nT))^2.
\end{equation}
\begin{color}{black} This Lyapunov function is similar to the one used in the proof of Theorem \ref{thm: ori}, and we will show that this is a valid Lyapunov function for the workload-based scheduling with learning. \end{color} Then, it is easy to verify that there exists $U_1$ independent of $\tilde{\bf M}(n)$ such that
\begin{align*}
&\Ex[V(n+1)-V(n)|\tilde{\mathbf{M}}(n)]\nonumber\\
<&U_1+2\alpha \Ex\left[W_s(nT)\left.\sum_{t=nT}^{(n+1)T-1}\left(A_s(t)-\mu_{2;s}(t)\right)\right|\tilde{\mathbf{M}}(n)\right]\nonumber\\
&+\sum_{l\in{\cal L}}2\Ex\left[\left.Q_l(nT)\sum_{t=nT}^{(n+1)T-1}\left(X_l(t)-\mu_{2;l}(t)\right)\right|\tilde{\mathbf{M}}(n)\right].
\end{align*}
Dividing the time into two segments $[nT, nT+D-1]$ and $[nT+D, (n+1)T-1],$ we obtain
\begin{align*}
&\Ex[V(n+1)-V(n)|\tilde{\mathbf{M}}(n)]\nonumber\\
<&U_1+2\alpha W_s(nT) \bar{\lambda} D+2\sum_{l\in{\cal L}} Q_l(nT)x_lD\\
&+2\alpha \Ex\left[W_s(nT)\left.\sum_{t=nT+D}^{(n+1)T-1}\left(A_s(t)-\mu_{2;s}(t)\right)\right|\tilde{\mathbf{M}}(n)\right]\nonumber\\
&+\sum_{l\in{\cal L}}2\Ex\left[\left.Q_l(nT)\sum_{t=nT+D}^{(n+1)T-1}\left(X_l(t)-\mu_{2;l}(t)\right)\right|\tilde{\mathbf{M}}(n)\right].
\end{align*}

Note that $|Q_l(t_1)-Q_l(t_2)|$ and $|W_k(t_1)-W_k(t_2)|$ are both bounded by some constants independent of $\tilde{\mathbf{M}}(n),$ so there exists $\tilde{U}$ such that
\begin{align*}
&\Ex[V(n+1)-V(n)|\tilde{\mathbf{M}}(n)]\nonumber\\
<&\tilde{U}+2\alpha W_s(nT) \bar{\lambda} D+2\sum_{l\in{\cal L}} Q_l(nT)x_lD\\
&+2\Ex\left[\left.\alpha\sum_{t=nT+D}^{(n+1)T-1}W_s(t)\left(A_s(t)-\mu_{2;s}(t)\right)\right|\tilde{\mathbf{M}}(n)\right]\nonumber\\
&+\sum_{l\in{\cal L}}2\Ex\left[\left.\sum_{t=nT+D}^{(n+1)T-1}Q_l(t)\left(X_l(t)-\mu_{2;l}(t)\right)\right|\tilde{\mathbf{M}}(n)\right].
\end{align*}
Now, by adding and subtracting $\mu_{1;\cdot}(t),$ we obtain
\begin{align*}
&\Ex[V(n+1)-V(n)|\tilde{\mathbf{M}}(n)]\\
\leq&\tilde{U}+2\alpha W_s(nT) \bar{\lambda} D+2\sum_{l\in{\cal L}} Q_l(nT)x_lD+\sum_{t=nT+D}^{(n+1)T-1} \hbox{Drift}(t),
\end{align*} where
\begin{align}
&\hbox{Drift}(t)\nonumber\\
=&2\Ex\left[\left.\alpha W_s(t)A_s(t)+\sum_{l\in{\cal L}}Q_l(t)X_l(t)\right|\tilde{\mathbf{M}}(n)\right]\label{eq: 1:1}\\
&-2\Ex[\alpha W_s(t)\mu_{1;s}(t)+\sum_{l\in{\cal L}} Q_l(t)\mu_{1;l}(t)|\tilde{\mathbf{M}}(n)]\label{eq: 1:2}\\
&+\sum_{l\in{\cal L}}2\Ex[Q_l(t)\left(\mu_{1;l}(t)-\mu_{2;l}(t)\right)|\tilde{\mathbf{M}}(n)]\label{eq: 2:2}\\
&+2\Ex\left[\alpha W_s(t)\left(\mu_{1;s}(t)-\mu_{2;s}(t)\right)|\tilde{\mathbf{M}}(n)\right].\label{eq: 4}
\end{align}

Note that (\ref{eq: 2:2})+(\ref{eq: 4}) is the difference between WS and WSL. In the following analysis, we will prove that this difference is small compared to the absolute value of (\ref{eq: 1:1})+(\ref{eq: 1:2}).

We define \begin{align*}
{\hbox{Diff}(t)}=&\alpha W_s(t)\left(\mu_{1;s}(t)-\mu_{2;s}(t)\right)\\
&+\sum_{l\in{\cal L}}Q_l(t)\left(\mu_{1;l}(t)-\mu_{2;l}(t)\right),
\end{align*}
and
$$\tilde{W}_s(t)=\sum_{i\in {\cal I}(t)} \left\lceil\frac{Q_i(t)}{\tilde{R}_i^{\max}(t)}\right\rceil.$$
Next, we compute its value in three different situations:
\begin{itemize}
\item {\bf Situ-A:}  {\em Consider the situation in which $\alpha \tilde{W}_s(t)\leq \max_{l\in{\cal L}}Q_l(t)R_l(t).$} We note that $\tilde{W}_s(t)\geq W_s(t)$ since $\tilde{R}_i^{\max}(t)\leq R_i^{\max}$ for all $t$ and $i.$ Therefore, given $\alpha \tilde{W}_s(t)\leq \sum_{l\in{\cal L}}Q_l(t),$  both WS and WSL will select a long-lived flow. In this case, we can conclude that $$\mu_{1;l}(t)=\mu_{2;l}(t)\hbox{ and }\mu_{1;s}(t)=\mu_{2;s}(t)=0,$$ and
$\hbox{Diff}(t)=0.$

\item {\bf Situ-B:} {\em Consider the situation in which $\alpha {W}_s(t)> \max_{l\in{\cal L}}Q_l(t)R_l(t).$} In this case, both WS and WSL will select a short-lived flow, which implies that
$$\mu_{1;l}(t)=\mu_{2;l}(t)=0,$$ and
\begin{align*}
{\hbox{Diff}(t)}=&\alpha W_s(t)\left(\mu_{1;s}(t)-\mu_{2;s}(t)\right)\\
&\leq \alpha W_s(t)\left(1-\mu_{2;s}(t)\right).
\end{align*}

\item {\bf Situ-C:} {\em Consider the situation in which $\alpha \tilde{W}_s(t)> \max_{l\in{\cal L}}Q_l(t)R_l(t)\geq \alpha W_s(t).$} In this case, WS will select a long-lived flow and WSL will select a short-lived flow. We hence have
$$\mu_{1;l}(t)>0 \hbox{ and } \mu_{1;s}(t)=\mu_{2;l}(t)=0,$$ and
\begin{align*}
{\hbox{Diff}(t)}&=\max_{l\in{\cal L}}Q_l(t)R_l(t)-\alpha W_s(t) \mu_{2;s}(t)\\
&\leq \alpha \tilde{W}_s(t)-\alpha W_s(t) \mu_{2;s}(t)
\end{align*}
\end{itemize}
\rightline{$\square$}

According to the analysis above, we have that
\begin{align*}
&\Ex[{\hbox{Diff}(t)}|\tilde{\mathbf{M}}(n)]\\
\leq&  \Ex\left[\alpha W_s(t)|\hbox{Situ-B}, \mu_{2;s}=0, \tilde{\mathbf{M}}(n)\right]\times\\
&\Pr\left(\hbox{Situ-B}, \mu_{2;s}=0|\tilde{\mathbf{M}}(n)\right)\\
+& \Ex\left[\alpha\tilde{W}_s(t)|\hbox{Situ-C}, \mu_{2;s}=0, \tilde{\mathbf{M}}(n)\right]\times\\
&\Pr\left(\hbox{Situ-C}, \mu_{2;s}=0|\tilde{\mathbf{M}}(n)\right)\\
+& \Ex\left[\alpha\tilde{W}_s(t)-\alpha W_s(t)|\hbox{Situ-C}, \mu_{2;s}=1, \tilde{\mathbf{M}}(n)\right]\times\\
&\Pr\left(\hbox{Situ-C}, \mu_{2;s}=1|\tilde{\mathbf{M}}(n)\right).
\end{align*}

Next we define a finite set $\tilde{\Upsilon}.$ We first introduce some constants:
\begin{itemize}
\item $\epsilon_1=\min\left\{\frac{\bar{\lambda}\epsilon}{32}, \frac{\epsilon\min_l x_l}{8R^{\max}}\right\}.$

\item $\epsilon_2=\frac{\bar{\lambda}\epsilon}{32 R^{\max} },$ and $D_{\epsilon_2}$ and $N_{\epsilon_2}$ are the numbers that guarantee $\Pr\left({\cal E}_{\scriptsize miss}(t)\right)\leq \epsilon_2,$ which are defined by the goodness of the tie-breaking rule.

\item $\lambda^{\max}_W=K\lambda^{\max}F^{\max},$ which is the maximum number of bits of short-lived flows injected in one time slot, and also the upper bound on the new workload injected in the network in one time slot.
\end{itemize}
We define a set $\tilde{\Upsilon}$ such that $$\tilde{\Upsilon}=\left\{\tilde{\mathbf{M}}(n):   \substack{W_s(nT)\leq \tilde{U}_W +2T+\frac{2\sum_l x_l R^{\max}T}{\alpha\bar{\lambda}}\\ Q_l(nT)\leq \tilde{U}_Q +\frac{2\alpha\bar{\lambda}T}{\min_l x_l}+\frac{2T R^{\max} \sum_l x_l}{\min_l x_l}\hbox{ }\forall l}\right\}.$$ In this definition, $\tilde{U}_W$ is a positive integer satisfying that
\begin{eqnarray}
&(1-p^{\max}_s)^{\frac{\tilde{U}_W}{{F}^{\max}}}\leq  \epsilon_1,\label{eq: UW12}\\
&\tilde{U}_W\geq \frac{\frac{8\tilde{U}}{T-D}+16\epsilon_2\alpha\lambda^{\max}_W T+8\alpha DR^{\max}+16\epsilon_2 \alpha R^{\max}T+8 \lambda^{\max}_W D}{\epsilon \alpha \bar{\lambda}}\label{eq: UW22}\\
&\frac{\tilde{U}_W}{F^{\max}} \geq N_{\epsilon_2} \label{eq: UW23},
\end{eqnarray} and $\tilde{U}_Q$ is a positive integer satisfying \begin{eqnarray}
\displaystyle \tilde{U}_Q\geq \textstyle {\frac{8\tilde{U}+12\alpha R^{\max}(\tilde{U}_W+\frac{2\sum_l x_l R^{\max}T}{\alpha\bar{\lambda}}+(\lambda^{\max}_W+2) T)}{\epsilon \min_l x_l}.}\label{eq: UQ}
\end{eqnarray}
Since the changes of $W_s(t)$ and $Q_l(t)$ during each time slot is bounded by some constants independent of $\tilde{\mathbf{M}}(n),$ it is easy to verify that $\tilde{\Upsilon}$ is a set of a finite number of elements.

Next, we analyze the drift of Lyapunov function case by case assuming that
\begin{eqnarray}
D>\left\lceil \frac{\log{\bar{\lambda}\epsilon}-\log{16}-\log{R^{\max}}}{\log(1-p_s^{\max})}\right\rceil \label{eq: D_condition}
\end{eqnarray}
and $T>\left\lceil \frac{(4+\epsilon)D}{\epsilon}\right\rceil.$

\begin{list}{\labelitemi}{\leftmargin=1em}

\item {\bf Case I:} Assume that $\tilde{M}(n)\in\tilde{\Upsilon}.$ In this case, it is easy to verify that $\Ex[V(n+1)-V(n)|\tilde{\mathbf{M}}(n)]$ is bounded by some constant $\tilde{U}_d.$

\item {\bf Case II:} Assume that $$W_s(nT)> \tilde{U}_W +2T+\frac{2\sum_l x_l R^{\max}T}{\alpha\bar{\lambda}} \geq \tilde{U}_W+T.$$ Recall that ${\cal E}_{\scriptsize miss}(t)$ is the event such that the tie-breaking rule selects a short-lived flow with $\tilde{R}^{\max}_i(t)\not=R^{\max}_i.$  Note that $\mu_{2;s}(t)=0$ implies that ${\cal E}_{\scriptsize miss}(t)$ occurs. Also note the following facts:
\begin{itemize}
\item[-] For any $nT\leq t\leq (n+1)T,$ we have $W(t)\leq W(nT)+\lambda^{\max}_W T,$

\item[-] Given $W_s(nT)\geq \tilde{U}_W+T,$ we have $ W_s(t)\geq \tilde{U}_W$ for all $nT\leq t\leq (n+1)T-1.$ Then according to the definition of $\epsilon_2$ and $\tilde{U}_W$ and assumption that the tie-breaking rule is good, we have
$$\Pr\left({\cal E}_{\scriptsize miss}(t)\right)\leq \epsilon_2$$ for all $nT+D\leq t\leq (n+1)T-1.$

\item[-]  Given any $\tilde{\mathbf{M}}(n)$  and any $nT+D\leq t\leq (n+1)T-1,$ we have
\begin{align}
& \Ex\left[\alpha \tilde{W}_s(t)-\alpha W_s(t)|\hbox{Situ-C}, \mu_{2;s}=1, \tilde{\mathbf{M}}(n)\right]\times\nonumber\\
&\Pr\left(\hbox{Situ-C}, \mu_{2;s}=1|\tilde{\mathbf{M}}(n)\right)\nonumber\\
\leq&  \Ex\left[\alpha \tilde{W}_s(t)-\alpha W_s(t)|\tilde{\mathbf{M}}(n)\right]\nonumber\\
=& \Ex\left[\left.\Ex\left[\left.\alpha  \tilde{W}_s(t)- \alpha  W_s(t)\right|W_s(t-D)|\right]\right|\tilde{\mathbf{M}}(n)\right]\nonumber\\
\leq& \Ex\left[\alpha(1-p_s^{\max})^DW_s(t-D)R^{\max}+\alpha\lambda^{\max}_W D|\tilde{\mathbf{M}}(n)\right]\label{eq: 5}\\
\leq&\Ex\left[\alpha (1-p_s^{\max})^D(W_s(t)+D)R^{\max}+\alpha \lambda^{\max}_W D|\tilde{\mathbf{M}}(n)\right]\nonumber,
\end{align} where the inequality (\ref{eq: 5}) holds because at most $\lambda^{\max}_W D$ bits belonging to short-lived flows are in the network for less than $D$ time slots at time $t,$  and a flow having been in the network for at least $D$ time slots can estimate correctly its workload with a probability at least $1-(1-p_s^{\max})^D.$
\end{itemize}
Now according to the observations above, we can obtain that
\begin{align*}
&\Ex[{\hbox{Diff}(t)}|\tilde{\mathbf{M}}(n)]\\
\leq& \epsilon_2\alpha \left(W_s(nT)+\lambda_{W}^{\max}T\right)+ \epsilon_2\alpha \left(R^{\max}W_s(nT)+\lambda_{W}^{\max}T\right) \\
&+  \Ex\left[\alpha(1-p_s^{\max})^D(W_s(t)+D)R^{\max}+\alpha\lambda^{\max}_W D|\tilde{\mathbf{M}}(n)\right].
\end{align*}
Combining with the analysis leading to (\ref{eq: II1}) in Appendix A, we conclude that
\begin{align*}
&{\hbox{Drift}(t)}\\
\leq &2  \Ex\left[ \epsilon_1 \left(\alpha  W_s(t)+R^{\max}\max_{l\in{\cal L}}Q_l(t)\right)\right.\nonumber\\
&-\epsilon \alpha  W_s(t) \bar{\lambda}-\epsilon\sum_{l\in{\cal L}}Q_l(t)x_l\nonumber\\
&+\epsilon_2\alpha \left(W_s(nT)+\lambda_{W}^{\max}T\right)\\
&+ \epsilon_2\alpha \left(R^{\max}W_s(nT)+\lambda_{W}^{\max}T\right) \\
&+ \left.\alpha(1-p_s^{\max})^D(W_s(t)+D)R^{\max}+\alpha \lambda^{\max}_W D|\tilde{\mathbf{M}}(n)\right]\\
\leq& \Ex\left[\left.-\epsilon\left(\alpha \bar{\lambda} W_s(t)+\sum_{l\in {\cal L}}x_l Q_l(t)\right)\right|\tilde{\mathbf{M}}(n)\right],
\end{align*} where the last inequality holds due to (\ref{eq: UW22}).

\item {\bf Case III:} Assume that $$W_s(nT)< \tilde{U}_W +2T+\frac{2\sum_l x_l R^{\max}T}{\alpha\bar{\lambda}} $$ and $$Q_l(nT)>\tilde{U}_Q +\frac{2\alpha\bar{\lambda}T}{\min_l x_l}+\frac{2T R^{\max} \sum_l x_l}{\min_l x_l}>\tilde{U}_Q$$ for some $l.$ In this case, we have
\begin{align*}
{\hbox{Diff}(t)}\leq \alpha\tilde{W}_s(t)\leq \alpha R^{\max}W_s(t).
\end{align*} Combining with the analysis leading to (\ref{eq: use1}) in Appendix A, we have that
\begin{align*}
&{\hbox{Drift}(t)}\\
\leq&2\Ex\left[\alpha R^{\max}W_s(t)+2\alpha W_s(t)\right.\\
&\left.\left.-\epsilon\left(\alpha \bar{\lambda} W_s(t)+\sum_{l\in {\cal L}}x_l Q_l(t)\right)\right|\tilde{\mathbf{M}}(n)\right]\\
\leq& \Ex\left[\left.-\epsilon\left(\alpha \bar{\lambda} W_s(t)+\sum_{l\in {\cal L}}x_l Q_l(t)\right)\right|\tilde{\mathbf{M}}(n)\right],
\end{align*} where the last inequality holds due to (\ref{eq: UQ}).
\end{list}
\rightline{$\square$}

Now, combining case II and case III, we can obtain that
\begin{align*}
&\Ex[V(n+1)-V(n)|\tilde{\mathbf{M}}(n)]\\
\leq&\tilde{U}+2\alpha W_s(nT) \bar{\lambda} D+2\sum_{l\in{\cal L}} Q_l(nT)x_lD\\
&+\sum_{t=nT+D}^{(n+1)T-1} \Ex\left[\left.-\epsilon\left(\alpha \bar{\lambda} W_s(t)+\sum_{l\in {\cal L}}x_l Q_l(t)\right)\right|\tilde{\mathbf{M}}(n)\right]\\
\leq&\tilde{U}+2\alpha W_s(nT) \bar{\lambda} D+2\sum_{l\in{\cal L}} Q_l(nT)x_lD\\
&-\epsilon (T-D) \left(\alpha \bar{\lambda} W_s(nT)+\sum_{l\in {\cal L}}x_l Q_l(nT)\right)\\
&+\epsilon(T-D)(\alpha\bar{\lambda} T+\sum_{l\in{\cal L}} x_l R^{\max} T)\\
\leq&-\tilde{U}-\sum_{t=nT+D}^{(n+1)T-1} \Ex\left[\left.\frac{\epsilon}{2}\left(\alpha \bar{\lambda} W_s(t)+\sum_{l\in {\cal L}}x_l Q_l(t)\right)\right|\tilde{\mathbf{M}}(n)\right],
\end{align*} where the last inequality yields from the definition of $\tilde{U}_W$ and $\tilde{U}_Q.$ Finally, we can conclude the theorem from \cite{asm03,MeyTwe_09}.

\section*{Appendix C: The Uniform Tie-breaking Rule}

Recall that we define ${\cal E}_{\scriptsize miss}(t)$ to be the event that the tie-breaking rule selects a short-lived flow with $\tilde{R}^{\max}_i(t)\not=R^{\max}_i.$

\begin{prop}
The uniform tie-breaking rule is good.
\end{prop}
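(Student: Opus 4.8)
The plan is to show that under uniform tie-breaking the probability of selecting a flow with a stale estimate decays once both the learning window $D$ and the backlog threshold $N$ are large. Since the uniform rule picks a flow uniformly among the eligible set $E(t)=\{i\in{\cal I}(t): R_i(t)=\tilde R_i^{\max}(t)\text{ or }R_i(t)\geq Q_i(t)\}$, conditioning on the channel history through slot $t$ gives $\Pr({\cal E}_{\scriptsize miss}(t)\mid\text{history})=|E_{\mathrm{w}}(t)|/|E(t)|$ on $\{E(t)\neq\emptyset\}$, where $E_{\mathrm{w}}(t)\subseteq E(t)$ collects the eligible flows with $\tilde R_i^{\max}(t)\neq R_i^{\max}$. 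Thus I would reduce the claim to (i) a lower bound on the number of eligible flows that carry the correct estimate and (ii) an upper bound on the number that carry a wrong estimate. The hypothesis $W_s(t-D)\geq N$ feeds this through the observation that, because each flow contributes at most $F^{\max}$ to the workload, there are at least $m:=N/F^{\max}$ short-lived flows present at time $t-D$.

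For the denominator, note that any flow with $R_i(t)=R_i^{\max}$ is automatically eligible and has a correct estimate, since the window defining $\tilde R_i^{\max}(t)$ contains slot $t$. At most $D$ of the $m$ flows present at $t-D$ can depart during $[t-D,t-1]$ (one service per slot), so at least $m-D$ of them survive to time $t$. Crucially, $R_i(t)$ is independent of all randomness through slot $t-1$, so conditioned on ${\cal I}(t)$ the events $\{R_i(t)=R_i^{\max}\}$ for the surviving old flows are independent with probability at least $p^{\max}_s$; hence $|E(t)|\geq S$, where $S\succeq\mathrm{Binomial}(m-D,p^{\max}_s)$. A Chernoff bound then gives $S\geq\tfrac12(m-D)p^{\max}_s$ off an event of probability exponentially small in $m$. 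For the numerator, I would split $E_{\mathrm{w}}(t)$ into flows that arrived in $(t-D,t]$—of which there are at most $K\lambda^{\max}D$—and old flows that failed to observe $R_i^{\max}$ anywhere in their window $[t-D,t]$. Writing $Z_i=1_{\{R_i(s)\neq R_i^{\max}\ \forall s\in[t-D,t]\}}$ for $i\in{\cal I}(t-D)$, we have $|E_{\mathrm{w}}(t)|\leq K\lambda^{\max}D+\sum_{i\in{\cal I}(t-D)}Z_i$. Conditioned on ${\cal I}(t-D)$, which is measurable with respect to the history through the start of slot $t-D$, the $Z_i$ are independent across flows with $\Pr(Z_i=1)\leq(1-p^{\max}_s)^{D+1}$, so $\Ex[\sum_i Z_i\mid{\cal I}(t-D)]\leq m(1-p^{\max}_s)^{D+1}$.

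Finally I would combine the two bounds. On the concentration event of the denominator the ratio is at most $\bigl(K\lambda^{\max}D+\sum_i Z_i\bigr)\big/\bigl(\tfrac12(m-D)p^{\max}_s\bigr)$, and off it I bound $\Pr({\cal E}_{\scriptsize miss}(t))$ by $1$; the latter also absorbs the case $E(t)=\emptyset$, which forces $S=0$ and hence lies in the bad event. Taking expectations, dropping the indicator in the numerator, and using $m\geq N/F^{\max}$ together with $N\geq 2DF^{\max}$ (so that $m-D\geq m/2$) yields a bound of the form
\[
\Pr({\cal E}_{\scriptsize miss}(t))\leq e^{-c N}+\frac{4K\lambda^{\max}DF^{\max}}{N p^{\max}_s}+\frac{4(1-p^{\max}_s)^{D+1}}{p^{\max}_s},
\]
for some constant $c>0$ depending only on $p^{\max}_s$ and $F^{\max}$. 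Given $\epsilon_{\scriptsize miss}$, I first choose $D_{\epsilon_{\scriptsize miss}}$ so that the last (deterministic) term is at most $\epsilon_{\scriptsize miss}/2$, and then choose $N_{\epsilon_{\scriptsize miss}}$, depending on that $D$, large enough that the first two terms sum to at most $\epsilon_{\scriptsize miss}/2$; this verifies goodness.

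The main obstacle is the statistical coupling between a flow's continued presence at time $t$ and its own channel history: a flow that has not yet seen its best state is precisely the kind the algorithm tends not to finish, which could bias the estimate-error probability upward if one conditioned naively on presence at $t$. The plan sidesteps this by conditioning on the system state at the earlier time $t-D$—so that the window channels $\{R_i(s)\}_{s\geq t-D}$ remain fresh and independent across flows for the numerator bound—and by exploiting that the single current channel $R_i(t)$ is independent of the entire past for the denominator bound.
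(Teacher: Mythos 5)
Your proposal is correct and follows essentially the same route as the paper's Appendix~C: bound the conditional miss probability by the ratio of wrong-estimate flows to flows currently seeing their (estimated) best state, lower-bound the denominator via the at-most-one-departure-per-slot observation and a Chernoff bound on the surviving old flows being at their true best rate, upper-bound the numerator by the at most $K\lambda^{\max}D$ recent arrivals plus the old flows that missed their best state throughout the window, and finish using that a workload of $N$ forces at least $N/F^{\max}$ flows. The only cosmetic differences are that you use a first-moment bound where the paper applies a second Chernoff bound to the numerator, and that you spell out the independence/conditioning argument the paper leaves implicit.
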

\begin{proof}
Suppose set $${\cal J}(t)=\left\{i: R_i(t)=\tilde{R}_i^{\max}(t)\hbox{ or } R_i(t)\geq Q_i(t)\right\}.$$ Under the uniform tie-breaking, ${\cal E}_{\scriptsize miss}(t)$ occurs with probability
\begin{eqnarray*}
&&\frac{\left|\left\{i: i\in {\cal J}(t) \hbox{ and } \tilde{R}_i^{\max}(t)\not=R^{\max}_i\right\}\right| }{\left|{\cal J}(t)\right|}\\
&\leq& \frac{\left|\left\{i: \tilde{R}_i^{\max}(t)\not=R^{\max}_i\right\}\right| }{\left|\left\{i: R_i(t)=\tilde{R}_i^{\max}(t)\right\}\right|}.
\end{eqnarray*}

Assume that $N$ short-lived flows are in the network at time $t-D$ and denote by $\cal N$ the set of these short-lived flows. Our proof contains the following two steps:

{\bf Step 1:} We first obtain an upper bound on $$N_1\triangleq \left|\left\{i: \tilde{R}_i^{\max}(t)\not=R^{\max}_i\hbox{ and } i\in{\cal N} \right\}\right|.$$ Considering a short-lived flow (flow $i$) which is in the network at time $t-D,$ we have
\begin{eqnarray*}
\Pr\left(\tilde{R}^{\max}_i(t)\not={R}^{\max}_i\right)\leq (1-p_s^{\max})^D\triangleq\epsilon.
\end{eqnarray*}
Thus, $\Ex[N_1]\leq \epsilon N.$ According to the Chernoff bound, we have
\begin{eqnarray*}
&&\Pr\left(N_1\geq 1.1\epsilon N+D \right)\\
&\leq& \exp\left(-\frac{(1.1\epsilon N+D-\Ex[N_1])^2}{3\Ex[N_1]}\right)\\
&\leq&\exp\left(-\frac{(0.1\epsilon N+D)^2}{3\epsilon N}\right)\\
&\leq&\exp\left(-0.003\epsilon N-0.06D\right).
\end{eqnarray*}

Next note that at most $K\lambda^{\max}$ short-lived flows join the network during each time slot, so we can conclude that
\begin{eqnarray*}
&\Pr\left(\left|\left\{i: \tilde{R}_i^{\max}(t)\not=R^{\max}_i\right\}\right|\geq 1.1\epsilon N +K\lambda^{\max} D+D\right)\\
&\leq \exp\left(-0.003\epsilon N-0.06D\right).
\end{eqnarray*}

{\bf Step 2:} Since at most one flow can be completely transmitted in one time slot, so  least $N-D$ flows are in the network at time $t,$ each having a probability at least $p_s^{\max}$ to be in the best channel state.
\begin{eqnarray*}
&&\Pr\left(\left|\left\{i: R_i(t)=\tilde{R}_i^{\max}(t)\right\}\right|\leq 0.9p_s^{\max} (N-D)\right)\\
&\leq& \Pr\left(\left|\left\{i: R_i(t)={R}_i^{\max}(t)\right\}\right|\leq 0.9p_s^{\max} (N-D)\right)\\
&\leq& \exp\left(-0.003p_s^{\max}(N-D)\right).
\end{eqnarray*}

{\bf Summary:} From step 1 and step 2, we can conclude that
\begin{eqnarray*}
&&\Pr({\cal E}_{\scriptsize miss}(t))\\
&\leq& \frac{1.1\epsilon N+K\lambda^{\max}D+D}{0.9 p^{\max}_s (N-D)}+\exp\left(-0.003p_s^{\max}(N-D)\right)\\
&&\hspace{0.5in}+\exp\left(-0.003\epsilon N-0.06D\right),
\end{eqnarray*} which converges to zero as both $D$ and $N/D$ go to infinity. The proposition holds because the sizes of short-lived flows are bounded and a large workload implies a large number of short-lived flows.
\end{proof}

\section*{Appendix D: Oldest-first Tie-breaking Rule}
%

\begin{prop}
The oldest-first tie-breaking is a {\em good} tie-breaking rule.
\end{prop}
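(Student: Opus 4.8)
The plan is to mirror the proof of the uniform tie-breaking rule in Appendix C, adapting the counting argument to the priority structure induced by $\tau_i=\min\{\bar{\tau},\beta_i\}$. As there, I would fix a time $t$, let $N$ be the number of short-lived flows present at time $t-D$ (so $N\geq W_s(t-D)/F^{\max}$, since file sizes are bounded by $F^{\max}$), and call $\cal N$ this set; every flow in $\cal N$ has age at least $D$ at time $t$. Throughout I would assume $D_{\epsilon_{\scriptsize miss}}\geq\bar{\tau}$, so that every flow of age at least $D$ attains the maximal priority $\tau_i=\bar{\tau}$. Since ${\cal E}_{\scriptsize miss}(t)$ requires the selected flow to have $\tilde{R}_i^{\max}(t)\neq R_i^{\max}$, and the oldest-first rule selects uniformly among the \emph{eligible} flows achieving the largest $\tau_i$, the goal is to bound $\Pr({\cal E}_{\scriptsize miss}(t))$ by the ratio of the number of top-priority eligible flows with an incorrect estimate to the total number of top-priority eligible flows.

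For the denominator (the correctly-learned flows), I would argue as in Step 2 of Appendix C: at most $D$ flows depart during $[t-D,t]$, so at least $N-D$ flows of $\cal N$ survive to time $t$, each of age $\geq D\geq\bar{\tau}$ and hence of priority $\bar{\tau}$. Any survivor with $R_i(t)=R_i^{\max}$ automatically satisfies $R_i(t)=\tilde{R}_i^{\max}(t)=R_i^{\max}$, so it is simultaneously eligible, top-priority, and correctly learned. Because $R_i(t)$ is independent across flows and independent of survival (which is determined by channels up to $t-1$), the number of such flows stochastically dominates a Binomial$(N-D,p_s^{\max})$ variable, and a Chernoff bound gives at least $0.9\,p_s^{\max}(N-D)$ except on an event of probability at most $\exp(-0.003\,p_s^{\max}(N-D))$. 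On the complementary high-probability event the maximal priority equals $\bar{\tau}$, so no flow of age below $\bar{\tau}$ is ever selected.

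For the numerator (incorrectly-learned top-priority flows), such a flow has age $\geq\bar{\tau}$ and falls into one of two groups: (i) flows of $\cal N$ (age $\geq D$) with $\tilde{R}_i^{\max}(t)\neq R_i^{\max}$, each occurring with probability at most $\epsilon\triangleq(1-p_s^{\max})^{D}$ and independently across flows, so by the same Chernoff estimate as Step 1 of Appendix C their count is at most $1.1\epsilon N+D$ except with probability at most $\exp(-0.003\epsilon N-0.06D)$; and (ii) flows of age in $[\bar{\tau},D)$, i.e. those arriving in $(t-D,\,t-\bar{\tau}]$, whose number is deterministically at most $K\lambda^{\max}(D-\bar{\tau})$. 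Combining the two bounds would give
$$\Pr({\cal E}_{\scriptsize miss}(t))\leq \frac{1.1\epsilon N+D+K\lambda^{\max}(D-\bar{\tau})}{0.9\,p_s^{\max}(N-D)}+\exp(-0.003\,p_s^{\max}(N-D))+\exp(-0.003\epsilon N-0.06D),$$
which has exactly the form of the uniform bound and converges to $0$ as $D\to\infty$ and $N/D\to\infty$; goodness then follows as in Appendix C, since a large $W_s(t-D)$ forces a large $N$.

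The delicate point is that the age-based priority correlates the selection with each flow's channel history, and under oldest-first older flows are served, and hence leave, preferentially, so one cannot assume a ``typical'' surviving old population. The conditioning $W_s(t-D)\geq N_{\epsilon_{\scriptsize miss}}$ is exactly what rescues this: it guarantees at least $N-D$ survivors of age $\geq D$ no matter how aggressively the policy drains old flows, and the independence of the current-slot channel $R_i(t)$ from the survival event (which depends only on channels up to $t-1$) is what keeps both Chernoff bounds valid. Verifying these independence and measurability claims carefully, together with using the cap $\bar{\tau}$ to ensure the top-priority class is a single uniformly-sampled group, is where the real care is needed; the remaining arithmetic is identical to Appendix C.
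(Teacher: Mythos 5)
Your proof is correct, and it takes a genuinely different route from the paper's. The paper's Appendix D never assumes $D\geq\bar{\tau}$: it partitions the short-lived flows into age groups ${\cal G}_{\bar{\tau}},{\cal G}_{\bar{\tau}-1},\dots$ and runs a two-case analysis on whether the oldest group already contains $D^2$ flows at time $t-D$; when it does not, it locates a cutoff age $\tau^*>D$ so that the top $\approx D^2$ flows all have age exceeding $D$, bounds the wrong-estimate probability of that block by a union bound $(D^2+DK\lambda^{\max})(1-p_s^{\max})^D$, and bounds the chance that any younger flow is ever reached by $(1-p_s^{\max})^{D^2}$. You instead exploit the cap $\bar{\tau}$ head-on: by taking $D_{\epsilon_{\scriptsize miss}}\geq\bar{\tau}$, every survivor of the $N$ flows present at $t-D$ sits in the single top-priority class, so the oldest-first rule degenerates to uniform selection over a pool of size at least $N-D$, and the Appendix C ratio argument goes through verbatim with one extra additive term $K\lambda^{\max}(D-\bar{\tau})$ in the numerator for the flows of age in $[\bar{\tau},D)$ whose estimates are uncontrolled. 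Your version is shorter, makes the role of the cap $\bar{\tau}$ (which the paper's footnote says was introduced precisely to facilitate the proof) completely transparent, and correctly isolates the real issue — that age-based priority could in principle bias selection toward an atypical subpopulation — resolving it via the independence of $R_i(t)$ from the pre-$t$ history that determines survival. What the paper's longer argument buys is that it places no constraint on $D$ relative to $\bar{\tau}$ (so $D_{\epsilon_{\scriptsize miss}}$ need not exceed $\bar{\tau}$, which could be large), and its Case-2 union bound would survive even with uncapped ages. Both proofs share the same minor looseness in the quantifiers of the goodness definition (the term $K\lambda^{\max}D/(0.9p_s^{\max}(N-D))$ forces $N_{\epsilon_{\scriptsize miss}}$ to grow with $D$), so that is not a defect specific to your argument.
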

\begin{proof}
We assume that at time slot $t-D,$ there are $N>2D^2$ short-lived flows in the network. We group \emph{short-lived flows} into groups $\cal G$ according to the time they arrived at the network such that group ${\cal G}_{\bar{\tau}}(t)$ contains all flows arriving no less than $\bar{\tau}$ time slots ago at time $t,$ and group ${\cal G}_{\tau}(t)$ contains the flows arriving exact $\tau$ time slots ago at time $t$ ($\tau<\bar{\tau}$).



{\bf Case 1: } Assume that $|{\cal G}_{\bar{\tau}}(t-D)|\geq D^2.$ We first consider the following probability
\begin{align*}
\Pr\left(\hbox{a flow $\in{\cal G}_{\bar{\tau}}(t)$ is selected at $t$ and  } \tilde{R}^{\max}_i(t)\not=R_i^{\max}\right).
\end{align*} Note that ${\cal G}_{\bar{\tau}}(t)$ can contain at most $K\lambda^{\max}D$ additional flows compared to ${\cal G}_{\bar{\tau}}(t-D)$ since $|{\cal G}_{\tau}|\leq K\lambda^{\max}$ for all $\tau<\bar{\tau}.$  Following the analysis for the uniform tie-breaking in Appendix C, we can easily prove that
\begin{align*}
&\Pr\left(\hbox{a flow $\in{\cal G}_{\bar{\tau}}(t)$ is selected at $t$ and  } \tilde{R}^{\max}_i(t)\not=R_i^{\max}\right)\\
&\rightarrow 0
\end{align*} as $D$ goes to infinity.

Next, note that at most one short-lived flow can be completely transmitted in one time slot, so  ${\cal G}_{\bar{\tau}}(t)$ containing at least $D^2-D$ flows at time $t,$ which implies that
\begin{align*}
&\Pr\left(\hbox{a flow $\not\in{\cal G}_{\bar{\tau}}(t)$ is selected at $t$ and } \tilde{R}^{\max}_i(t)\not=R_i^{\max}\right)\\
\leq&\left(1-p_s^{\max}\right)^{D^2-D}.
\end{align*}
Therefore, we conclude that
\begin{align*}
&\Pr\left(\hbox{the selected flow at $t$ has } \tilde{R}^{\max}_i(t)\not=R_i^{\max}\right)\\
=&\Pr\left(\hbox{a flow ${\cal G}_{\bar{\tau}}(t)$ is selected at $t$ and } \tilde{R}^{\max}_i(t)\not=R_i^{\max}\right)\\
+&\Pr\left(\hbox{a flow $\not\in{\cal G}_{\bar{\tau}}(t)$ is selected at $t$ and } \tilde{R}^{\max}_i(t)\not=R_i^{\max}\right)
\end{align*} which converges to zero as $D$ goes to infinity.

{\bf Case 2: } Assume that $|{\cal G}_{\bar{\tau}}(t-D)|< D^2.$ In this case, we search the groups starting from group ${\cal G}_{\bar{\tau}}(t)$ and stop at group $\tau^*$ if $D^2+DK\lambda^{\max}>\sum_{\tau=\tau^*}^{\bar{\tau}}|{\cal G}_\tau(t)|\geq D^2.$ Note that when $D$ is sufficiently large,  such $\tau^*$ exists and $\tau^*>D$ because $N>2D^2$ and  $|{\cal G}_{\tau}|\leq K\lambda^{\max}$ for all $\tau<\bar{\tau}.$ Considering a certain flow $i$ such that $i\in \cup_{\tau=\tau^*}^{\bar{\tau}} {\cal G}_\tau(t),$ we have that
\begin{align*}
&\Pr\left(\hbox{flow $i$ is selected at $t$ and } \tilde{R}^{\max}_i(t)\not=R_i^{\max}\right)\\
&\leq \Pr\left(\tilde{R}^{\max}_i(t)\not=R_i^{\max}\right)\\
&\leq (1-p_s^{\max})^D\triangleq \epsilon.
\end{align*}   Thus, we can obtain that
\begin{align*}
&\Pr\left(\hbox{a flow $\in \cup_{\tau=\tau^*}^{\bar{\tau}} {\cal G}_\tau(t)$ is selected at $t$ and } \tilde{R}^{\max}_i(t)\not=R_i^{\max}\right)\\
\leq& (D^2+DK\lambda^{\max}) \epsilon,
\end{align*} which converges to zero as $D$ goes to infinity. Further, similar to the analysis in Case 1, we can obtain that when $D$ is sufficiently large,
\begin{align*}
&\Pr\left(\hbox{a flow $\not\in\cup_{\tau=0}^{\tau^*-1} {\cal G}_h(t)$ is selected at $t$ and } \tilde{R}^{\max}_i(t)\not=R_i^{\max}\right)\\
\leq&\left(1-p_s^{\max}\right)^{D^2},
\end{align*}  which converges to zero as well.

Combining Case 1 and 2, we can conclude that the oldest-first tie-breaking is a good tie-breaking rule.
\end{proof}



\end{document}